\def\ps@pprintTitle{%
   \let\@oddhead\@empty
   \let\@evenhead\@empty
   \gdef\@oddfoot{\footnotesize This article is published in {\it Automatica}, 105: 216--227, 2019. Minor corrections and modifications appear in blue colored text.\hfill}
   \let\@evenfoot\@oddfoot
}
\newlist{lyap}{enumerate}{1}
\setlist[lyap]{label = {\bf(L\arabic*)}, resume}
\newlist{assum}{enumerate}{1}
\setlist[assum]{label = {\bf(A\arabic*)}, resume}
\newlist{design}{enumerate}{1}
\setlist[design]{label = {\bf(D\arabic*)}, resume}
\theoremstyle{plain}
\newtheorem{thm}{Theorem}
\newtheorem{cor}{Corollary}
\newtheorem{prop}{Proposition}
\theoremstyle{definition}
\newtheorem{defn}{Definition}
\theoremstyle{lemma}
\theoremstyle{remark}
\newtheorem{remark}{Remark}
\DeclareMathOperator{\dom}{dom}
\DeclareMathOperator*{\esssup}{ess-sup}
\newcommand{\R}{\mathbb{R}}
\newcommand{\N}{\mathbb{N}}
\newcommand{\Z}{\mathbb{Z}}
\newcommand{\Rposi}{\R_{\geq 0}}
\newcommand{\cF}{\mathcal{F}}
\newcommand{\cG}{\mathcal{G}}
\newcommand{\cK}{\mathcal{K}}
\newcommand{\cL}{\mathcal{L}}
\newcommand{\cM}{\mathcal{M}}
\newcommand{\cKinfty}{\mathcal{K}_\infty}
\newcommand{\cP}{\mathcal{P}}
\newcommand{\cA}{\mathcal{A}}
\newcommand{\cC}{\mathcal{C}}
\newcommand{\cS}{\mathcal{S}}
\newcommand{\cD}{\mathcal{D}}
\newcommand{\cX}{\mathcal{X}}
\newcommand{\ul}{\underline}
\newcommand{\ol}{\overline}
\newcommand{\demi}{\frac{1}{2}}
\newcommand{\qtr}{\frac{1}{4}}
\newcommand{\acplowbar}{\underline{a}_{c,p}}
\newcommand{\acpupbar}{\overline{a}_{c,p}}
\newcommand{\aoplowbar}{\underline{a}_{o,p}}
\newcommand{\aopupbar}{\overline{a}_{o,p}}
\newcommand{\gmacpupbar}{\overline{\gamma}_{c,p}}
\newcommand{\gmaopupbar}{\overline{\gamma}_{o,p}}
\newcommand{\gmaupbar}{\overline{\gamma}_p}
\newcommand{\rhoopupbar}{\overline{\rho}_{o,p}}
\newcommand{\rhocpupbar}{\overline{\rho}_{c,p}}
\newcommand{\sigmaopupbar}{\overline{\mu}_{o,p}}
\newcommand{\sigmacpupbar}{\overline{\mu}_{c,p}}
\newcommand{\eps}{\varepsilon}
\newcommand{\et}{\mathtt{event}}
\newcommand{\true}{\mathtt{true}}
\newcommand{\sw}{\mathtt{sw}}
\newcommand{\sigetao}{\mu_{o,p}(\eta_o)}
\newcommand{\sigetac}{\mu_{c,p}(\eta_c)}
\newcommand{\inv}{^{\raisebox{.2ex}{$\scriptscriptstyle-1$}}}
\newcommand\norm[1]{\left\lVert#1\right\rVert}
\begin{document}
\tikzstyle{block} = [draw, rectangle, 
    minimum height=3em, minimum width=6em]
\tikzstyle{sum} = [draw, circle, node distance=1cm]
\tikzstyle{input} = [coordinate]
\tikzstyle{output} = [coordinate]
\tikzstyle{pinstyle} = [pin edge={to-,thin,black}]

\begin{frontmatter}

\title{ISS Lyapunov Functions for Cascade Switched Systems and Sampled-Data Control}

\author[gxzaddress]{GuangXue Zhang}
\ead{gxzhang7@uci.edu}

\author[ataddress]{Aneel Tanwani\corref{mycorrespondingauthor}}
\ead{aneel.tanwani@laas.fr}
\address[gxzaddress]{Department of Aerospace Engineering, University of California at Irvine, USA}
\address[ataddress]{Laboratory for Analysis and Architecture of Systems, Toulouse, CNRS, France}
\cortext[mycorrespondingauthor]{Corresponding author}
\fntext[myfootnote1]{The work of G. Zhang was supported under ``IDEX grant for nouveaux entrants'' as a part of her Masters internship at LAAS.
The work of A. Tanwani is partially supported by ANR JCJC project ConVan with grant number ANR-17-CE40-0019-01.
}



\begin{abstract}
Input-to-state stability (ISS) of switched systems is studied where the individual subsystems are connected in a serial cascade configuration, and the states are allowed to reset at switching times. An ISS Lyapunov function is associated to each of the two blocks connected in cascade, and these functions are used as building blocks for constructing ISS Lyapunov function for the interconnected system. The derivative of individual Lyapunov functions may be bounded by nonlinear decay functions, and the growth in the value of Lyapunov function at switching times may also be a nonlinear function of the value of other Lyapunov functions. The stability of overall hybrid system is analyzed by constructing a newly constructed ISS-Lyapunov function and deriving lower bounds on the average dwell-time. The particular case of linear subsystems and quadratic Lyapunov functions is also studied. The tools are also used for studying the observer-based feedback stabilization of a nonlinear switched system with event-based sampling of the output and control inputs. We design dynamic sampling algorithms based on the proposed Lyapunov functions and analyze the stability of the resulting closed-loop system.
\end{abstract}

\begin{keyword}
Switched systems \sep input-to-state stability \sep cascade connection \sep multiple Lyapunov functions \sep average dwell-time \sep output feedback \sep event-based control
\end{keyword}

\end{frontmatter}


\section{Introduction}

Switched systems, or in general, hybrid dynamical systems provide a framework for modeling a large class of physical phenomenon and engineering systems which combine discrete and continuous dynamics. Due to their wide utility, such systems have been extensively studied in the control community over the past two decades; see the books by \cite{Libe03} and \cite{GoebSanf12} for comprehensive overview.

This article addresses a robust stability problem for systems with switching vector fields and jump maps. In our setup, each subsystem has a two-stage serial cascade structure where the output of first block acts as an input to the second block, and the disturbances we consider are an exogenous input to the first block, see Figure~\ref{fig:cascade}. By proposing a novel construction for multiple Lyapunov functions for such configurations, we analyze the stability of the interconnected switched system by deriving lower bounds on average dwell-time between switching times. It is seen that such a configuration arises in the context of output feedback stabilization of switched systems with known switching signal where the inputs and outputs are time-sampled. The theoretical tools developed in the earlier part of this paper are then used to design sampling algorithms and analyze stability of the resulting sampled-data system. A preliminary version of the sampled-data problem, studied in the later part of this paper, has appeared in \citep{ZhanTanw18}.

Stability of switched systems has been a topic of interest in control community for past two decades now. Depending on the class of switching signals, or the assumptions imposed on the continuous dynamics, different approaches have been adopted in the literature to study the convergence of the state trajectories. The book \citep{Libe03} provides an overview on this subject. For our purposes, the approach based on slow switching is more relevant. In this direction, the pioneering contribution comes from \citep{HespMorse99} where the lower bounds on average dwell-time are computed using multiple Lyapunov functions. Another result on slow switching, but with state-dependent average dwell-time, has appeared in \citep{DePeDeSa03}. The second fundamental tool, that we build on, relates to the robustness with respect to external disturbances, formalized by the notion of {\it input-to-state stability} (ISS) introduced in \citep{Sontag89}. Using these classical works as foundation, our article provides a certain construction of the ISS-Lyapunov functions for the switched systems in cascade configuration and develops lower bounds on the dwell-time that guarantee ISS property for the switched system.

One of the first results on input-to-state stability of switched systems appears in \citep{VuChat07}, where the authors associate an ISS-Lyapunov function to each subsystem with linear decay rate, and assume that the Lyapunov function for each subsystem can be linearly dominated by the Lyapunov function of another subsystem at switching times.
Other relevant papers studying ISS property for systems with jump maps using dwell-time conditions are \citep{HespLibe08}, \citep{DashMiron13}. Using the notion of ISS, tools such as small gain theorems \citep{JiangTeel94}, or cascade principles \citep{SontTeel95} are developed to study different applications. The small gain theorems have in particular found utility in the stability analysis of interconnected systems \citep{Ito06}, \citep{DashRuff10}. For hybrid systems, in general, the ISS results using Lyapunov functions appear in \citep{CaiTeel09}, \citep{CaiTeel13}. Their utility is seen in analyzing stability of two interconnected hybrid systems in \citep{Sanf14} and \citep{LibeNesi14}, where the later in particular focuses on small-gain theorems and their application in control over networks. The stability of interconnected switched systems based on small gain theorems also appears in \citep{YangLibe15}. The more recent article then generalizes the results on interconnections \citep{YangLibe16} while allowing for potentially unstable subsystems and jump dynamics.

\begin{figure}
\begin{center}
\begin{tikzpicture}
\draw[thick] (1,0) node [rectangle, rounded corners, draw, minimum height = 0.75cm, text centered] (sys1) {$\begin{aligned}\dot e &= f_{o,\sigma}(e,d) \\ e^+ &= g_o(e,d)\end{aligned}$};
\draw[thick] (4.5,0) node [rectangle, rounded corners, draw, minimum height = 0.75cm, text centered] (sys2) {$\begin{aligned}\dot x &= f_{c,\sigma}(x,e) \\ x^+ &= g_c(x,e)\end{aligned}$};
\draw[thick, ->] (sys1.east) -- node[anchor=south]{$e$} (sys2.west);
\draw[thick, ->] (-1,0) node[anchor=south]{$d$}-- (sys1.west);
\draw[thick, ->] (sys2.east) -- node[anchor=south]{$x$} (6.5,0);
\end{tikzpicture}
\caption{Two-stage serial cascade system with switching dynamics.}
\label{fig:cascade}
\end{center}
\end{figure}
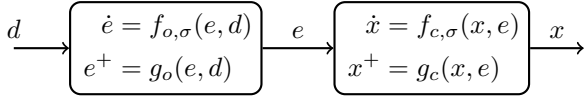

The first part of this article is also built on analyzing the stability of interconnected subsystems with continuous and discrete dynamics. However, we are interested in studying systems where the interconnection is described by a cascade configuration, see Fig.~\ref{fig:cascade}. Using the Lyapunov function construction in \citep{TanwTeel15}, we construct the Lyapunov functions for this cascade interconnection. We then use the framework of hybrid systems to describe the overall system with jump maps, and switching signal with average dwell-time constraints. A novel Lyapunov function is constructed for this hybrid system and the corresponding analysis provides the lower bounds on average dwell-time which yield global asymptotic stability of a certain set. In our approach, we do not require the decay rates of the individual Lyapunov functions to be linear, and the upper bounds on the value of individual Lyapunov function at jump instants may be nonlinear functions of other Lyapunov functions. When studying linear systems as an example, even though we associate quadratic Lyapunov functions to individual subsystems, the Lyapunov function for the overall hybrid system involves a product of the exponential function with a non-quadratic function, which to the best of our knowledge is a novel observation.

We then use these constructions to study the feedback stabilization of switched nonlinear systems when the output measurements and control inputs are time-sampled. Using an observer-based controller, where the estimation error dynamics and the closed-loop system (with static control) are ISS with respect to measurement errors, we rewrite the whole system in cascade configuration where the estimation error drives the state of the controlled plant. The measurement errors are introduced because we only send time-sampled outputs to the controller, and the controller only sends sampled control inputs to the plant. In both cases, the sampled measurements are subjected to a zero-order hold, and thus remain constant until the next sampling instant. Our goal is to derive algorithms to compute sampling algorithms which result in global asymptotic stability of the closed-loop system under the average dwell-time assumptions derived earlier. The event-based sampling strategy that we use is inspired from \citep{TanwTeel15}, where the dynamic filters are introduced. The next sampling instant occurs when the difference between the current value of the output (resp. input) and its last sample is comparatively larger than the value of the dynamic filter's state. Beyond the realm of periodic sampling, stabilization of dynamical systems has been studied subject to various sampling techniques, see for example \citep{HeemJoha12} and \citep{TanwChat18} for recent surveys. Among these methods, event-based control has received attention as an effective means of sampling and various variants of this problem have been studied over the past few years. However, this technique has not yet been studied for switched systems which is the second main contribution of this article.

The remainder of the article is organized as follows: In Section~\ref{sec:prelim}, we provide an overview of basic stability notions and existing results which will be used in this article. The system class of interest is introduced in Section~\ref{sec:cascade}, where we develop the main theoretical results on construction of Lyapunov functions, and developing bounds for average dwell-time. These results are applied in Section~\ref{sec:sampling} to study dynamic feedback stabilization of switched nonlinear systems with sampled-data, and the second main results concerning the design of sampling algorithms and stability analysis of closed-loop system is developed in this section. As an illustration, we provide simulation results for an academic example in Section~\ref{sec:example}, followed by some concluding remarks in Section~\ref{sec:conc}.

\section{Preliminaries}\label{sec:prelim}

In this section, we recall some basic notions of interest which relate to the stability of a hybrid system. For our purposes, it is useful to consider hybrid systems with inputs studied in \citep{CaiTeel13}, which are described  by following inclusions:
\begin{align}\label{eq:genHybSys}
\begin{cases}
\dot \xi \in \cF(\xi, d), \quad (\xi,d) \in \cC, \\
\xi^+ \in \cG(\xi,d),  \quad (\xi,d) \in \cD,
\end{cases} 
\end{align}
where $\xi \in \cX$ is the state and $d \in \R^{\ol d}$ is the external disturbance. The {\it flow set} $\cC \subseteq \cX \times \R^{\ol d}$, and the {\it jump set} $\cD \subseteq \cX \times \R^{\ol d}$ are assumed to be relatively closed in $\cX \times \R^{\ol d}$. The set-valued map $\cF: \cC \rightrightarrows \cX$ describes the continuous dynamics when $\xi$ belongs to the flow set $\cC$. The mapping $\cG:\cD \rightrightarrows \cX$ defines the state reset map, when $\xi$ belongs to the jump set $\cD$.

The solution of the hybrid system~\eqref{eq:genHybSys} is defined on a {\it hybrid time domain}. A set $E \subseteq \R_{\ge 0} \times \Z_{\ge 0}$ is called a compact hybrid time domain if $E = \cup_{j = 0}^J ([t_j,t_{j+1}],j)$ for some finite sequence $0 = t_0 \le t_1 \le \dots \le t_{J+1}$. We say that $E$ is a {\it hybrid time domain} if for each $(T,J)$ in $E$, the set $E \cap [0,T] \times \{0,1,\dots,J\}$ is a compact hybrid time domain. A function defined on a hybrid time domain is called a {\it hybrid signal}. In \eqref{eq:genHybSys}, the disturbance $d$ is a hybrid signal, so that $d(\cdot,j)$ is locally essentially bounded. A {\it hybrid arc} $\xi$ is a hybrid signal for which $\xi(\cdot,j)$ is locally absolutely continuous for each $j \in \Z_{\ge 0}$, and we use the notation $\xi(t,j)$ to denote the value of $\xi$ at time $t$ after $j$ jumps. For a given initial condition $\xi(0,0) \in \cC \cup \cD$, and a hybrid signal $d$, the solution $\xi$ to \eqref{eq:genHybSys} is a hybrid arc if $\dom \xi \subseteq \dom d$, and it holds that i)
for every $j\in \Z_{\ge 0}$ and almost every $t \in \R_{\ge 0}$ such that $(t,j) \in \dom \xi$, we have $(\xi(t, j), d(t, j)) \in \cC$ and $\dot \xi(t, j) \in \cF(\xi(t, j), d(t, j))$;
ii) for $(t, j) \in \dom \xi$ such that $(t,j + 1) \in \dom \xi$, we have $(\xi(t,j),d(t,j)) \in \cD$ and $\xi(t,j + 1) \in \cG(\xi(t, j), d(t, j))$.
It is assumed that the quadruple $(\cF,\cG,\cC,\cD)$ satisfies the basic assumptions listed in \citep[Assumption~6.5]{GoebSanf12}, so that system~\eqref{eq:genHybSys} has a well-defined solution $\xi$ in the space of hybrid arcs, for each hybrid signal $d$, but not necessarily unique.

To study the stability notions of interest for hybrid arcs, we need some notation. A function $\alpha:\R_{\ge 0} \to \R_{\ge 0}$ is said to be of class $\cK$ if it is continuous, strictly increasing, and $\chi(0) = 0$. If $\alpha$ is also unbounded, then it is said to be of class $\cK_\infty$. A function $\beta: \R_{\ge 0} \times \R_{\ge 0} \to \R_{\ge 0}$ is said to be of class $\cK\cL$ if $\beta (\cdot,t)$ is of class $\cK$ for each $t \in \R_{\ge 0}$ and $\beta(r,t) \to 0$ as $t \to \infty$ for each fixed $r \in \R_{\ge 0}$; see \cite[Chapter~4]{Khalil02} for their use in formulation of common stability notions. In addition, we require class $\cK\cL\cL$ function: A function $\beta:\R_{\ge 0} \times \R_{\ge 0} \times \R_{\ge 0} \to \R_{\ge 0}$ is a class $\cK\cL\cL$ function if $\beta(\cdot,\cdot,j)$ is a class $\cK\cL$ function for each $j \ge 0$ and $\beta(\cdot,s,\cdot)$ is a class $\cK\cL$ function for each $s \ge 0$. For a compact set $\cA \subset \cX$ and $\xi \in \cX$, $\vert \xi \vert_{\cA} := \inf_{z\in\cA} \vert \xi - z \vert$, where $\vert \cdot \vert$ denotes the usual Euclidean norm. Following \citep{CaiTeel09}, for a hybrid signal $d$, we use the notation $\|d \|_{(t, j)}$ to denote the maximum between $\displaystyle \esssup_{(\hat t, \hat j+1) \not \in \dom d, \hat t+ \hat j \le t + j} \vert d(\hat t, \hat j) \vert$ and $\displaystyle\sup_{(\hat t, \hat j+1) \in \dom d, \hat t+ \hat j\le t+ j} \vert d(\hat t, \hat j) \vert $. For positive-valued functions $\alpha, \chi$ over $\R_{\ge 0}$, we also use the Landau-notation to write $\alpha(s) = o(\chi(s))$ as $s \to c$ if $\lim_{s\to c} \frac{\alpha(s)}{\chi(s)} = 0$; similarly, we write $\alpha(s) = O(\chi(s))$ as $s \to c$ if $\lim_{s\to c} \frac{\alpha(s)}{\chi(s)} \le M$ for some $M > 0$.

\subsection{Input-to-State Stability}
We recall basic definitions and Lyapunov characterizations of ISS for hybrid systems from \citep{CaiTeel09}.
\begin{defn}[Input-to-state stability (ISS)]
System \eqref{eq:genHybSys} is {\it ISS} with respect to a compact set $\cA \subset \cX$ if there exist functions $\gamma \in \cK$, and $\beta \in \cK\cL\cL$ such that
\begin{equation}\label{eq:defISS}
|\xi(t,j)|_{\cA} \leq \beta(|\xi(0,0)|_{\cA}, t,j) + \gamma \left( \|d\|_{(t,j)} \right),
\end{equation}
for every $(t,j) \in \dom \xi$.
\end{defn}

\begin{defn}
[ISS Lyapunov function] A smooth function $V: \cX \to \Rposi$ is an ISS-Lyapunov function of the hybrid system \eqref{eq:genHybSys} w.r.t. a compact set $\cA \subset \cX$ if the following hold:
\begin{itemize}[leftmargin=*]
\item  there exist $\ul\alpha, \ol\alpha \in \cKinfty$ such that
\begin{equation}\label{boundV} 
\ul\alpha(\vert \xi \vert_{\cA}) \leq V(\xi) \leq \ol\alpha(\vert \xi \vert_{\cA}), \quad \forall \, \xi \in \cC \cup \cD \cup \cG(\cD),
\end{equation}
\item there exist $\widehat\alpha \in \cK_\infty$ and $\widehat\gamma \in \cK$ such that
\begin{equation}\label{eq:defIssVFlow}
\vert\xi\vert_{\cA} \geq \widehat\gamma(|d|) \Rightarrow \langle \nabla V(\xi), f \rangle \leq -\widehat\alpha(\vert\xi\vert_{\cA}),
\end{equation}
holds for every $(\xi,d) \in \cC$ and $f \in \cF(\xi,d)$.
\item the functions $\widehat \alpha$ and $\widehat \gamma$ also satisfy
\begin{equation}
\vert\xi\vert_{\cA} \geq \widehat\gamma(|d|) \Rightarrow V(g) - V(\xi) \leq -\widehat\alpha(\vert\xi\vert_{\cA}),
\end{equation}
for every $(\xi,d) \in \cD$ and for every $g \in \cG(\xi, d)$.
\end{itemize}
\end{defn}
The following result provides an alternate charaterization of ISS for system~\eqref{eq:genHybSys} by combining results given in \citep[Proposition~1]{CaiTeel09} and \citep[Theorem~1]{LibeShim15}.
\begin{prop}\label{hybridISSprop}
Consider system~\eqref{eq:genHybSys} and a compact set $\cA \subset \cX$. A differentiable function $V: \cX \to \Rposi$ satisfying \eqref{boundV} with $\ul \alpha, \ol \alpha \in \cKinfty$ is an ISS-Lyapunov function w.r.t. $\cA$ if and only if 
\begin{itemize}[leftmargin=*]
\item there exist $\alpha_\cC \in {\color{blue}\cK_\infty}$, $\gamma_\cC \in \cK$ and a continuous nonnegative function $\varrho:\R_{\ge 0} \to \R_{\ge 0}$ such that
\begin{subequations}
\begin{equation}\label{prop1Vflow} 
\langle \nabla V(\xi), f \rangle \leq -\alpha_\cC(\vert\xi\vert_{\cA})+ \varrho(\vert \xi \vert_{\cA}) \, \gamma_\cC(|d|)
\end{equation}
holds for each $(\xi,d) \in \cC$ and $f \in \cF(\xi,d)$,
\item there exist $\alpha_\cD \in \cKinfty$, $\gamma_\cD \in \cK$ that satisfy
\begin{equation}\label{prop1Vjump}
V(g) - V(\xi) \leq - \alpha_\cD(\vert\xi\vert_{\cA})+ \gamma_\cD(|d|)
\end{equation}
\end{subequations}
for each $(\xi,d) \in \cD$ and $g \in \cG(\xi, d)$,
\item the functions $\alpha_\cC,\varrho$ satisfy the {\it asymptotic ratio} condition
\begin{equation}\label{eq:ratioCond}
\limsup_{r \to \infty} \, \frac{\varrho(r)}{\alpha_\cC(r)} = 0.
\end{equation}
\end{itemize}
\end{prop}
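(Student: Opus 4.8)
The plan is to prove the two directions of the equivalence separately, and in each direction to treat the flow data \eqref{prop1Vflow} and the jump data \eqref{prop1Vjump} independently, combining the resulting gains only at the end. The guiding observation is that the jump relations are pointwise algebraic inequalities, so passing between the gain-margin form of the definition and the additive form \eqref{prop1Vjump} is the classical input-absorption argument and needs nothing beyond $\cKinfty$ comparisons; all of the genuine difficulty is concentrated in the flow inequality, where the state-dependent factor $\varrho$ and the asymptotic-ratio condition \eqref{eq:ratioCond} come into play. A single pair $\widehat\alpha,\widehat\gamma$ will be recovered at the end by taking the pointwise minimum of the two decay functions and the pointwise maximum of the two gains obtained for flow and jump.

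For the necessity direction I would start from $\widehat\alpha\in\cKinfty$, $\widehat\gamma\in\cK$ as in \eqref{eq:defIssVFlow} and set $\varrho\equiv1$, $\alpha_\cC:=\widehat\alpha$. On $\{\vert\xi\vert_\cA\ge\widehat\gamma(|d|)\}$ the bound $\langle\nabla V(\xi),f\rangle\le-\widehat\alpha(\vert\xi\vert_\cA)$ is already available. On the complementary set $\xi$ lies in a bounded neighbourhood of the compact set $\cA$ whose size is controlled by $|d|$, so the standing assumptions on $(\cF,\cG,\cC,\cD)$ (closedness of $\cC$, local boundedness and outer semicontinuity of $\cF$) together with the continuity of $\nabla V$ bound $\langle\nabla V(\xi),f\rangle$ by a nondecreasing function of $|d|$, which I majorise by some $\sigma\in\cK$ (vanishing at the origin because $\nabla V=0$ on $\cA$ by \eqref{boundV}). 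Taking $\gamma_\cC(s):=\widehat\alpha(\widehat\gamma(s))+\sigma(s)$ then yields \eqref{prop1Vflow} on all of $\cC$, and \eqref{eq:ratioCond} holds trivially since $\varrho/\alpha_\cC=1/\widehat\alpha\to0$. The jump inequality \eqref{prop1Vjump} follows identically, bounding $V(g)-V(\xi)$ on $\{\vert\xi\vert_\cA<\widehat\gamma(|d|)\}$ via local boundedness of $\cG$ and the upper bound in \eqref{boundV}.

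The sufficiency direction is where I expect the main obstacle. For the jump part it is enough to pick $\widehat\gamma\ge\alpha_\cD^{-1}\circ(2\gamma_\cD)$, so that $\gamma_\cD(|d|)\le\demi\alpha_\cD(\vert\xi\vert_\cA)$ on $\{\vert\xi\vert_\cA\ge\widehat\gamma(|d|)\}$ and \eqref{prop1Vjump} becomes the required implication with decay $\demi\alpha_\cD$. For the flow part I want $\widehat\gamma$ such that $\varrho(\vert\xi\vert_\cA)\,\gamma_\cC(|d|)\le\demi\alpha_\cC(\vert\xi\vert_\cA)$ whenever $\vert\xi\vert_\cA\ge\widehat\gamma(|d|)$, which would give $\langle\nabla V,f\rangle\le-\demi\alpha_\cC(\vert\xi\vert_\cA)$. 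Writing the target as $\gamma_\cC(|d|)\le\theta(\vert\xi\vert_\cA)$ with $\theta:=\alpha_\cC/(2\varrho)$, condition \eqref{eq:ratioCond} is precisely the statement that $\theta(r)\to\infty$ as $r\to\infty$; since $\theta$ is positive on $(0,\infty)$, its largest nondecreasing minorant $\underline\theta(r):=\inf_{s\ge r}\theta(s)$ stays positive for $r>0$ and is unbounded. I would then define $\widehat\gamma$ as a class-$\cK$ majorant of the generalized inverse $s\mapsto\inf\{r>0:\underline\theta(r)\ge\gamma_\cC(s)\}$, so that $\vert\xi\vert_\cA\ge\widehat\gamma(|d|)$ forces $\gamma_\cC(|d|)\le\underline\theta(\vert\xi\vert_\cA)\le\theta(\vert\xi\vert_\cA)$, as desired.

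The delicate point, and the reason \eqref{eq:ratioCond} cannot be dropped, is this construction of a legitimate class-$\cK$ gain $\widehat\gamma$ out of the merely continuous, possibly non-monotone and non-vanishing factor $\varrho$: it is exactly the asymptotic-ratio condition that makes $\underline\theta$ positive on $(0,\infty)$ and divergent at infinity, hence makes the generalized inverse a well-defined function that is continuous and zero at the origin after majorisation. Without it, the input term $\varrho(\vert\xi\vert_\cA)\gamma_\cC(|d|)$ could overwhelm $\alpha_\cC(\vert\xi\vert_\cA)$ for large states, so no gain margin would exist and $V$ would fail to be an ISS-Lyapunov function despite \eqref{prop1Vflow}; this asymmetry also explains why no analogue of \eqref{eq:ratioCond} is needed for the jump relation, which carries no $\varrho$ factor.
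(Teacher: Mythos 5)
The paper itself offers no proof of this proposition: it is assembled by citation from \citep[Proposition~1]{CaiTeel09} and \citep[Theorem~1]{LibeShim15}, with Remark~1 recording that the pair $(\widehat\alpha,\widehat\gamma)$ is obtained constructively from the triplet $(\alpha_\cC,\varrho,\gamma_\cC)$ using the asymptotic ratio condition \eqref{eq:ratioCond}. Your argument is a correct self-contained reconstruction of exactly that route --- in particular the sufficiency step, building $\widehat\gamma$ as a class-$\cK$ majorant of the generalized inverse of the nondecreasing minorant of $\alpha_\cC/(2\varrho)$, is precisely the construction the paper delegates to \citep{LibeShim15}, while the necessity and jump parts are the standard absorption arguments; the only cosmetic point is that ``$\nabla V=0$ on $\cA$'' presumes $\cA$ lies in the interior of $\cX$, though the same vanishing of the majorant $\sigma$ near $s=0$ follows anyway from upper semicontinuity of $(\xi,d)\mapsto\sup_{f\in\cF(\xi,d)}\langle\nabla V(\xi),f\rangle$ together with the $d=0$ case of \eqref{eq:defIssVFlow}.
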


\begin{remark}
The inequality \eqref{prop1Vflow} is different from the expression given in \citep[{\color{blue}Proposition~2.6}]{CaiTeel09}. It can be shown that \eqref{prop1Vflow} also implies \eqref{eq:defIssVFlow}. This implication is proved in a constructive manner, that is, the pair $(\widehat \alpha, \widehat \gamma)$ is constructed from the triplet $(\alpha_\cC,\varrho,\gamma_\cC)$, in \citep[Theorem~1]{LibeShim15} using the condition \eqref{eq:ratioCond}, which appears in Remark~1 of that paper.
\end{remark}

\subsection{Cascade Switched Systems}

The framework of \eqref{eq:genHybSys} is useful for modeling switched systems. We are interested in studying switched systems in cascade configuration which comprise a family of dynamical subsystems described by
\begin{subequations}\label{cascnonlinearsys}
\begin{align}
&\dot x = f_{c,p}(x,e), \label{stage1p} \\
&\dot e = f_{o,p}(e,d), \label{stage2p}
\end{align}
\end{subequations}
where $p$ belongs to a {\it finite} index set $\cP$. The vector fields $f_{c,p} : \R^{n_c} \times \R^{n_o} \to \R^{n_c}$ and $f_{o,p}:\R^{n_o}\times \R^{n_d} \to \R^{n_o}$ are assumed to be continuous for each $p \in \cP$. It is also assumed that $f_{c,p}(0,0) = 0$, and $f_{o,p} (0,0) = 0$, and the stability of the origin $\{0\} \in \R^{n_c + n_o}$ is the topic of interest in the sequel.
The switched system generated by \eqref{cascnonlinearsys} is 
\begin{subequations}\label{cascswisys}
\begin{align}
\dot x = f_{c,\sigma}(x,e)\\
\dot e = f_{o,\sigma}(e,d),
\end{align}
\end{subequations}
where $\sigma:\R_{\ge 0} \to \cP$ denotes the piecewise constant right-continuous switching signal. The function $\sigma$ changes its value at switching times which are denoted by $\{t_i\}_{i \in \N}$. At these switching times, we allow the state values to have jumps defined by the maps
\begin{subequations}\label{eq:jumpSwSys}
\begin{align}
x^+ &= g_c(x,e) \\
e^+ & = g_o(e,d),
\end{align}
\end{subequations}
so that $x(t_i^+) = \big(x(t_i)\big)^+$, and $e(t_i^+) = \big(e(t_i)\big)^+$ denote the value of the state variables just after the switching times.
We say that the switching signal $\sigma$ has an average dwell-time $\tau_a$, denoted $\sigma \in \cS_{\tau_a}$ if there exists $N_0 \ge 1$ such that for each $t > s \ge 0$, it holds that
\begin{equation}\label{defADT}
N_{\sigma(t, s)} \leq N_0 + \frac{t-s}{\tau_a}
\end{equation}
where $N_{\sigma(t, s)}$ is the number of switching in the interval $(s, t]$. The constant $N_0$ is called the {
\it chatter bound} giving the tolerance number of fast switchings. 

\paragraph{Problem~1}Given that each subsystem in \eqref{stage1p} (with $e$ as input) and \eqref{stage2p} (with $d$ as input) admits an ISS Lyapunov function w.r.t. the origin, how can we 
\begin{enumerate}
\item compute the lower bound on $\tau_a$, and
\item construct an ISS Lyapunov function for the hybrid system \eqref{cascswisys}-\eqref{eq:jumpSwSys},
\end{enumerate}
such that, for each $\sigma \in \cS_{\tau_a}$, we have
\[
\left\vert (x(t,j),e(t,j)) \right\vert \le \beta (\left\vert (x(0,0),e(0,0)) \right\vert, t, j) + \gamma \left( \|d\|_{(t,j)} \right)
\]
for some $\beta \in \cK\cL\cL$, and $\gamma \in \cK$.

\section{Stability of Cascade System}\label{sec:cascade}

To find a solution to the problem mentioned above, we proceed in several steps which allow us to arrive at the result given in Theorem~\ref{thm:mainISS}.

\subsection{Individual Lyapunov Functions}
The first step is to formally state the stability assumptions imposed on the dynamical subsystem \eqref{stage1p} and \eqref{stage2p} which are formally listed below:
\begin{lyap}[leftmargin=*]
\item\label{Voprop} For each $p \in \cP$, there exists a continuously differentiable function $V_{o,p} : \R^{n_o} \to \Rposi$, and there exist class $\cK_\infty$ functions $\overline{\alpha}_{o,p}$, $\underline{\alpha}_{o,p}$, $\alpha_{o,p}$ and $\gamma_{o,p}$ such that
\[
\underline{\alpha}_{o,p}(|e|) \leq V_{o,p}(e) \leq \overline{\alpha}_{o,p}(|e|)
\]
\[
\left\langle\frac{\partial V_{o,p}}{\partial e}, f_{o,p}(e,d)\right\rangle \leq - \alpha_{o,p}(V_{o,p}(e)) + \gamma_{o,p}(|d|)
\]
 hold for every $(e,d) \in \R^{n_o} \times \R^{n_d}$.
\end{lyap}
\begin{lyap}[leftmargin=*]
\item\label{Vcprop} For each $p \in \cP$, there exists a continuously differentiable function $V_{c,p} : \R^{n_c} \to \Rposi$, and there exist class $\cK_\infty$ functions $\overline{\alpha}_{c,p}$, $\underline{\alpha}_{c,p}$, $\alpha_{c,p}$, and $\gamma_{c,p}$ such that
\begin{subequations}
\[
\underline{\alpha}_{c,p}(|x|) \leq V_{c,p}(x) \leq \overline{\alpha}_{c,p}(|x|) 
\] 
\[
\left\langle\frac{\partial V_{c,p}}{\partial x} f_{c,p}(x,e) \right\rangle\leq - \alpha_{c,p}(V_{c,p}(x)) + \gamma_{c,p}(V_{o,p}(e)),
\]
\end{subequations}
hold for every $(x,e) \in \R^{n_c} \times \R^{n_o}$.
\end{lyap}

\begin{lyap}[leftmargin=*]
\item\label{Assumq} As $s \to 0^+$, we have $\gamma_{c,p}(s) = O(\alpha_{o,p}(s))$, that is, if we let
\begin{equation}\label{eq:defbarnu}
\ol \nu_p(s) := \frac{\gamma_{c,p} (s)}{\alpha_{o,p}(s)}, \quad \text{for }s > 0,
\end{equation}
then there exists a constant $M >0$, such that
\[
\lim_{s \to 0^+}  \ol \nu_p(s) \leq M.
\]
\end{lyap}

In addition, we introduce the following assumption\footnote{It is also possible to consider more general jump maps of the form $x^+ = g_c(x,e,d)$ and $e^+ = g_o(x,e,d)$, provided that the inequalities in \ref{assJump} take the form $| g_c(x,e,d)| \leq \widehat{\alpha}_c(|(x,e)|) + \widehat{\rho}_c(|d|)$ and $| g_o(x,e,d)| \leq \widehat{\alpha}_o(|(x,e)|) + \widehat{\rho}_o(|d|)$. The results of this paper would carry just by changing the map $\rho$ in \eqref{eq:defrho}.} on the jump maps introduced in \eqref{eq:jumpSwSys}.

\begin{assum}[leftmargin=*]
\item\label{assJump} For each $(x,e,d) \in \R^{n_c + n_o+n_d}$, the jump maps at switching times satisfy
\begin{align}
& | g_c(x,e)| \leq \widehat{\alpha}_c(|(x,e)|) \notag \\
& | g_o(e,d)| \leq \widehat{\alpha}_o(|e|) + \widehat{\rho}_o(|d|) \notag 
\end{align}
for some class $\cK_\infty$ functions $\widehat{\alpha}_c$, $\widehat{\alpha}_o$, $\widehat{\rho}_c$, and $\widehat{\rho}_o$.
\end{assum}

Using the assumptions introduced above, it is possible to construct a candidate Lyapunov function $V_p(x,e)$ for each subsystem $p\in \cP$. This construction is primarily inspired from the work of \citep{TanwTeel15}.

\begin{remark}
The assumption~\ref{Assumq} is introduced to provide a construction of the candidate Lyapunov function $V_p$ explicitly in terms of $V_{o,p}$ and $V_{c,p}$. One can always modify the function $V_{c,p}$ to $\widehat V_{c,p}$ such that the resulting $\widehat\gamma_{c,p}$ satisfies \ref{Assumq}; this is a direct consequence of \citep[Theorem~1]{SontTeel95} as we can choose $\widehat\gamma_{c,p}(s) = O(\alpha_{o,p}(s)) \cdot \underline{\alpha}_{o,p}(s)$ for $s$ sufficiently small in the neighborhood of origin.
\end{remark}
\begin{prop}\label{basicVp}
Consider the family of dynamical subsystems \eqref{cascnonlinearsys} satisfying \ref{Voprop}, \ref{Vcprop}, and \ref{Assumq}, along with the jump dynamics \eqref{eq:jumpSwSys} satisfying \ref{assJump}. For each $p \in \cP$, introduce the continuously differentiable function $V_p$,
\begin{equation}\label{defVp}
V_p(x,e) := \int_0^{V_{o,p}(e)} \nu_p(s) \, ds + V_{c,p}(x),
\end{equation}
where $\nu_p:\R_{\ge 0} \to \R_{\ge 0}$ is a continuous and nondecreasing function with $\nu_p(s) \ge 4 \ol\nu_p(s)$, for each $s > 0$. It then holds that, for some $\ul\alpha_p, \ol\alpha_p \in \cK_\infty$,
\begin{equation}\label{eq:VpBound}
\ul\alpha_p(\vert(x,e)\vert) \le V_p(x,e) \le \ol\alpha_p(\vert(x,e)\vert), \quad \forall (x,e) \in \R^{n_c+n_o}.
\end{equation}
There also exist $\alpha_p,\gamma_p\in \cKinfty$ such that
\begin{equation}\label{eq:VpDecayBound}
\left\langle \nabla V_p(x,e), \begin{pmatrix} f_{c,p}(x,e)\\ f_{o,p}(e,d)\end{pmatrix}\right\rangle \le -\alpha_p(V_p(x,e)) + \gamma_p (|d|),
\end{equation}
for every $(x,e,d) \in \R^{n_c+n_o+n_d}$.
Moreover, there exist $\chi, \rho \in \cKinfty$ such that for each $(p,q) \in \cP \times \cP$, $q \neq p$,
\begin{equation}\label{eq:VpJumpBound}
V_q (x^+,e^+) \le \chi (V_p(x,e)) + \rho (|d|),
\end{equation}
for every $(x,e,d) \in \R^{n_c+n_o+n_d}$.
\end{prop}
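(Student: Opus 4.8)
The plan is to handle the three claims in turn, writing $W_p(r) := \int_0^r \nu_p(s)\,ds$ so that $V_p(x,e) = W_p(V_{o,p}(e)) + V_{c,p}(x)$. Since $\ol\nu_p(s) = \gamma_{c,p}(s)/\alpha_{o,p}(s) > 0$ for $s>0$, the requirement $\nu_p(s) \ge 4\ol\nu_p(s)$ forces $\nu_p(s)>0$ on $(0,\infty)$; together with continuity and monotonicity this makes $W_p$ continuous, strictly increasing, and (because $\nu_p(s)\ge\nu_p(1)>0$ for $s\ge1$) unbounded, hence $W_p\in\cKinfty$. The bounds \eqref{eq:VpBound} then follow routinely: for the upper bound I substitute $V_{o,p}(e)\le\ol\alpha_{o,p}(|e|)$ and $V_{c,p}(x)\le\ol\alpha_{c,p}(|x|)$ and majorize $|e|,|x|$ by $|(x,e)|$; for the lower bound I use $\max\{|x|,|e|\}\ge\demi|(x,e)|$ to get $V_p(x,e)\ge\min\{\ul\alpha_{c,p}(\demi|(x,e)|),\,W_p(\ul\alpha_{o,p}(\demi|(x,e)|))\}$, and the minimum of two $\cKinfty$ functions is again $\cKinfty$.

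For the decay estimate \eqref{eq:VpDecayBound}, writing $v_o:=V_{o,p}(e)$ and $v_c:=V_{c,p}(x)$, the chain rule together with \ref{Voprop}--\ref{Vcprop} gives
\[
\left\langle \nabla V_p, \begin{pmatrix}f_{c,p}\\f_{o,p}\end{pmatrix}\right\rangle \le -\nu_p(v_o)\alpha_{o,p}(v_o) + \nu_p(v_o)\gamma_{o,p}(|d|) - \alpha_{c,p}(v_c) + \gamma_{c,p}(v_o).
\]
The key algebraic step is that $\nu_p(v_o)\ge 4\ol\nu_p(v_o)$ yields $\gamma_{c,p}(v_o)\le\qtr\nu_p(v_o)\alpha_{o,p}(v_o)$, so the cross term is absorbed into the first term at the cost of shrinking it to $-\tfrac34\nu_p(v_o)\alpha_{o,p}(v_o)$. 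The disturbance term $\nu_p(v_o)\gamma_{o,p}(|d|)$ I handle by a dichotomy: when $\alpha_{o,p}(v_o)\ge 2\gamma_{o,p}(|d|)$ it is bounded by $\demi\nu_p(v_o)\alpha_{o,p}(v_o)$ and absorbed, leaving $-\qtr\nu_p(v_o)\alpha_{o,p}(v_o)$; when $\alpha_{o,p}(v_o)<2\gamma_{o,p}(|d|)$, monotonicity of $\nu_p$ bounds it by the disturbance-only quantity $\nu_p(\alpha_{o,p}\inv(2\gamma_{o,p}(|d|)))\,\gamma_{o,p}(|d|)=:\widetilde\gamma(|d|)$, which vanishes at $0$ and can be dominated by some $\gamma_p\in\cKinfty$. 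In both cases one arrives at $\langle \nabla V_p, (f_{c,p},f_{o,p})\rangle \le -\qtr\nu_p(v_o)\alpha_{o,p}(v_o) - \alpha_{c,p}(v_c) + \gamma_p(|d|)$.

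It then remains to fuse the two decay terms into a single $\cKinfty$ function of $V_p = W_p(v_o)+v_c$. Setting $\theta(v_o):=\qtr\nu_p(v_o)\alpha_{o,p}(v_o)$, which lies in $\cKinfty$ since $\nu_p>0$ on $(0,\infty)$ and $\alpha_{o,p}\in\cKinfty$, I again split on whether $W_p(v_o)\ge v_c$ or not: in the first case $V_p\le 2W_p(v_o)$ gives $\theta(v_o)\ge\theta(W_p\inv(\demi V_p))$, in the second $V_p<2v_c$ gives $\alpha_{c,p}(v_c)\ge\alpha_{c,p}(\demi V_p)$, so $\theta(v_o)+\alpha_{c,p}(v_c)\ge\alpha_p(V_p)$ with $\alpha_p(r):=\min\{\theta(W_p\inv(\demi r)),\,\alpha_{c,p}(\demi r)\}\in\cKinfty$, establishing \eqref{eq:VpDecayBound}. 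For the jump bound \eqref{eq:VpJumpBound} I start from $V_q(x^+,e^+)=W_q(V_{o,q}(e^+))+V_{c,q}(x^+)$, insert the upper bounds from \ref{Voprop}--\ref{Vcprop} and the growth estimates of \ref{assJump}, and use the weak-triangle inequality $\alpha(a+b)\le\alpha(2a)+\alpha(2b)$ for $\cKinfty$ functions to separate the $|(x,e)|$-dependence from the $|d|$-dependence; finally I replace $|(x,e)|$ by $\ul\alpha_p\inv(V_p(x,e))$ using the lower bound of the first part. This produces admissible $\chi,\rho\in\cKinfty$ for each ordered pair $(p,q)$, and since $\cP$ is finite I take the pointwise maximum over the finitely many pairs to obtain uniform $\chi,\rho$.

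I expect the main obstacle to be the bookkeeping in the decay estimate: the disturbance enters multiplied by the state-dependent weight $\nu_p(v_o)$, so it cannot be separated by a naive Young-type inequality, and the two-case argument (absorb versus dominate by a $|d|$-only term) is what makes \ref{Assumq} and the monotonicity of $\nu_p$ indispensable—\ref{Assumq} is precisely what lets $\ol\nu_p$, and hence a continuous $\nu_p$, stay finite near the origin, while monotonicity is what caps $\nu_p(v_o)$ by a function of $|d|$ on the small-$v_o$ regime.
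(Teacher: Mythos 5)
Your proposal is correct and follows essentially the same route as the paper's proof: the same decomposition $V_p = \ell_p\circ V_{o,p} + V_{c,p}$, the same dichotomy on whether $\gamma_{o,p}(|d|)$ exceeds $\tfrac12\alpha_{o,p}(V_{o,p}(e))$ combined with monotonicity of $\nu_p$ to cap the disturbance weight, the same weak-triangle-inequality fusion of the two decay terms (which the paper delegates to \citep[Lemma~10]{Kell14} and you carry out explicitly), and the same treatment of the jump bound via \ref{assJump} and $|(x,e)|\le\ul\alpha_p\inv(V_p)$. The only cosmetic difference is that you retain $-\qtr\nu_p(v_o)\alpha_{o,p}(v_o)$ as the $e$-part of the decay where the paper further weakens it to $-\gamma_{c,p}(V_{o,p}(e))$.
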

\begin{proof}
Fix $p\in\cP$. Introduce the class $\cKinfty$ function $\ell_p: \Rposi \to \Rposi $ as follows: 
\begin{equation}\label{defl}
\ell_p(s) = \int_{0}^{s} \nu_p(r) dr
\end{equation}
where $\nu_p$ was introduced in \eqref{defVp}, so that
\[
V_p(x,e) = (\ell_p \circ V_{o,p})(e) + V_{c,p}(x).
\]
The bound \eqref{eq:VpBound} is seen to hold since $\ell_p$ is a class $\cK_\infty$ function.
Using \ref{Voprop}, we now obtain
\begin{multline}\label{eq:dlVoTemp}
\left\langle \nabla (\ell_p \circ V_{o,p})(e), f_{o,p}(e,d) \right\rangle \\ 
\le \nu_p(V_{o,p}(e)) ( - \alpha_{o,p}(V_{o,p}(e)) + \gamma_{o,p}(|d|)).
\end{multline}
To analyze the right-hand side of \eqref{eq:dlVoTemp}, first consider the case where $\gamma_{o,p}(\vert d \vert) \le \frac 12 \alpha_{o,p}(V_{o,p}(e))$, so that
\[
\left\langle \nabla (\ell_p \circ V_{o,p})(e), f_{o,p}(e,d) \right\rangle \le -\frac 12 \nu_p(V_{o,p}(e))\alpha_{o,p}(V_{o,p}(e));
\]
else, by introducing $\theta_p(s):=\alpha_{o,p}^{-1}(2\gamma_{o,p} (s))$,
\begin{align*}
\frac{1}{2} \alpha_{o,p}(V_{o,p}(e)) \le \gamma_{o,p}(\vert d \vert) \Leftrightarrow V_{o,p}(e) & \le \alpha_{o,p}^{-1}(2\gamma_{o,p}(\vert d \vert)) \\
& = \theta_p(\vert d \vert)
\end{align*}
so that $\nu_p(V_{o,p}(e)) \le \nu_p(\theta_{p}(\vert d \vert))$, because $\nu_p$ is by construction nondecreasing, and
\begin{multline*}
 \left\langle \nabla (\ell_p \circ V_{o,p})(e), f_{o,p}(e,d) \right\rangle \le \\
- \nu_p(V_{o,p}(e))\alpha_{o,p}(V_{o,p}(e)) + \nu_p(\theta_p(\vert d \vert))\gamma_{o,p}(\vert d \vert).
\end{multline*}
From these two cases, the inequality \eqref{eq:dlVoTemp} results in
\begin{multline}\label{eq:dlVo}
 \left\langle \nabla (\ell_p \circ V_{o,p})(e), f_{o,p}(e,d) \right\rangle \le \\
-\frac 12 \nu_p(V_{o,p}(e))\alpha_{o,p}(V_{o,p}(e)) + \nu_p(\theta_p(\vert d \vert))\gamma_{o,p}(\vert d \vert).
\end{multline}

Using \ref{Vcprop}, \eqref{eq:dlVo}, and the fact that $\nu_p(s) \ge 4 \ol \nu_p(s)$, for each $s > 0$ with $\ol \nu_p$ given in \eqref{eq:defbarnu}, we can now derive \eqref{eq:VpDecayBound} as follows:
\begin{align*}
 & \left\langle \nabla V_p(x,e), \begin{pmatrix} f_{c,p}(x,e)\\ f_{o,p}(e,d)\end{pmatrix}\right\rangle \\
 & \quad \le -\frac 12 \nu_p(V_{o,p}(e))\alpha_{o,p}(V_{o,p}(e)) + \nu_p(\theta_p(\vert d \vert))\gamma_{o,p}(\vert d \vert) \\
 & \quad \quad - \alpha_{c,p}(V_{c,p}(x)) + \gamma_{c,p}(V_{o,p}(e))\\
& \quad \le   - \gamma_{c,p}(V_{o,p}(e))) - \alpha_{c,p}(V_{c,p}(x)) + \nu_p(\theta_p(|d|))  \gamma_{o,p}(|d|) \\
& \quad \le -\alpha_p(V_p(x,e)) + \gamma_p(\vert d \vert),
\end{align*}
where we used the definitions $\gamma_p(s) := \nu_p(\theta_p(s)) \gamma_{o,p}(s)$,
\begin{equation}\label{defalphapinput}
\alpha_p(s) := \min \left\{ \alpha_{c,p} \left(\demi s\right),  \gamma_{c,p} \left( \demi \ell_p\inv (s)\right) \right\},
\end{equation}
and the triangle-inequality type result from \citep[Lemma~10]{Kell14} to derive the last inequality.

Next, to derive \eqref{eq:VpJumpBound}, we observe that
\[
V_q(x^+,e^+) = (\ell_q \circ V_{o,q})(g_o(e,d)) + V_{c,q}(g_c(x,e)).
\]
Using \ref{assJump}, it then follows that
\begin{align}
V_q(x^+,e^+) &\leq \ell_q \circ \overline{\alpha}_{o,q}(|g_o(e,d)|) + \overline{\alpha}_{c,q}(| g_c(x,e)|) \notag \\
& \leq \ell_q \circ \overline{\alpha}_{o,q} \left(2 \widehat{\alpha}_o(|(x,e)|) \right) +  \ell_q \circ \overline{\alpha}_{o,q} (2 \widehat{\rho}_o(|d|) \notag \\
&\quad + \overline{\alpha}_{c,q}(\widehat{\alpha}_c(|(x,e)|)) \notag \\ 
&\leq \chi(V_p(x,e)) + \rho (|d|)
\end{align}
where, recalling \eqref{eq:VpBound}, we used the definitions
\begin{equation}\label{eq:defchi}
\chi(s) :=  \max_{p,q \in \cP} \left\{ \ell_q \circ \overline{\alpha}_{o,q} \left(2 \widehat{\alpha}_o \circ \ul\alpha_p\inv (s)\right) + \overline{\alpha}_{c,q}(\widehat{\alpha}_c \circ \ul\alpha_p\inv (s)) \right\}
\end{equation} and 
\begin{equation}\label{eq:defrho}
\rho(s) := \max_{q \in \cP} \left\{ \ell_q \circ \overline{\alpha}_{o,q} (2  \widehat{\rho}_o(s)) \right\}, 
\end{equation}
which establishes the desired bound since both functions are class $\cK_\infty$.
%
\end{proof}
\subsection{Stability of Overall Hybrid System}\label{sec:plainHybSys}
We now use the result of Proposition~\ref{basicVp} to compute lower bounds on the dwell-time which result in cascade switched system being globally ISS.
To do so, we find it convenient to express the switched system in the framework of the hybrid system adopted in \citep{GoebSanf12}.
This is done by introducing the augmented state variable $\xi := \left( x, e, p, \tau \right) \in \cX:= \R^{n_c+n_o} \times \cP \times [0,N_0]$, where $p$ is a discrete variable denoting a subsystem, and $\tau$ plays the role of a scaled timer. The hybrid model capturing the dynamics of the switched system driven by an external disturbance $d \in \R^{\ol d}$, and where the switching signals have an average dwell-time $\tau_a$, is
\begin{subequations}\label{eq:smallHybSys}
\begin{align}
&(\xi,d) \in \cC:
\begin{cases}
\dot x = f_{c,p}(x, e)\\
\dot e = f_{o,p}(e, d)\\
\dot p = 0 \\
\dot \tau \in [0, \frac{1}{\tau_a}]\\
\end{cases} \label{eq:smallHybSysa} \\
&(\xi,d) \in \cD:
\begin{cases}
x^+ = g_c(x,e)  \\
e^+ = g_o(e,d)  \\
p^+ \in \cP \setminus \{p\} \\
\tau^+ = \tau -1 
\end{cases} \label{eq:smallHybSysb}
\end{align}
\end{subequations}
where the flow set $\cC:=\cX \times \R^{\ol d}$, and the jump set $\cD:=\R^{n_c+n_o} \times \cP \times [1,N_0] \times \R^{\ol d}$. We denote the set-valued mapping on the right-hand side of \eqref{eq:smallHybSysa} by $\cF(\xi,d)$, and the mapping on the right-hand side of \eqref{eq:smallHybSysb} is denoted by $\cG(\xi,d)$.
We are interested in studying the ISS property of the system~\eqref{eq:smallHybSys} (driven by the disturbance $d$) with respect to the compact set
\begin{equation}\label{eq:defA0}
\cA_0 := \{0\}^{n_c+n_o} \times \cP \times [0,N_0]
\end{equation}
by finding an appropriate ISS Lyapunov function. To do so, we introduce the function $\varphi:\R_{\ge 0} \to \R_{\ge 0}$ defined as
\begin{equation}\label{eq:defPhi}
\varphi(s) = \begin{cases} \exp \left( \int_{1}^{s} \frac{2{\color{blue}c_0}}{\psi(r)} \, dr \right), & s > 0 \\ 0, & s = 0 \end{cases}
\end{equation}
where $\psi:\R_{\ge 0} \to \R_{\ge 0}$ is a continuously differentiable class $\cK_\infty$ function, with $\psi'(0) = 0$, and
\begin{align*}
& \psi(s) \le \min\{c_0s, \alpha_p(s) \, \vert \, p \in \cP\}, \quad {\color{blue} s \ge 0},\\
{\color{blue}\text{\sout{and}}} \quad & {\color{blue}\hbox{\sout{$\psi(s) \le \min\{\alpha_p(s) \, \vert \, p \in \cP\}, \quad s \ge c_1$}}},
\end{align*}
for some $c_0 > 0$. We recall that $\alpha_p \in \cK_\infty$ were introduced as the decay function for $V_p$ in \eqref{eq:VpDecayBound}.
The function $\varphi$ is now used in the following result:


\begin{thm}\label{thm:mainISS}
Consider system \eqref{eq:smallHybSys} and suppose that \ref{Voprop}, \ref{Vcprop}, \ref{Assumq}, \ref{assJump} hold. Let $\chi \in \cK_\infty$ be as in \eqref{eq:VpJumpBound}.
If, for some $\eps>0$, the average dwell-time $\tau_a$ satisfies
\begin{equation}\label{condADT}
\tau_a > \zeta^* :=  \sup_{s \ge 0} \, \int_{s }^{(1+\eps)\chi(s)} \frac{1}{\psi(r)} \, dr,
\end{equation}
then for each $\tau_a > \zeta > \zeta^*$, 
\begin{equation}\label{eq:defWp}
W (x,e,p,\tau) := \exp(2 \, {\color{blue}c_0} \, \zeta\tau)\varphi(V_p(x,e)), 
\end{equation}
is an ISS Lyapunov function for the hybrid system \eqref{eq:smallHybSys} w.r.t. the compact set $\cA_0$, and input $d$.
\end{thm}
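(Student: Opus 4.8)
The plan is to verify directly that $W$ meets the three requirements in the Definition of an ISS-Lyapunov function relative to $\cA_0$. I would check the \emph{implication} form \eqref{eq:defIssVFlow} (together with its jump analogue) rather than the dissipation form of Proposition~\ref{hybridISSprop}; this avoids having to control the asymptotic ratio \eqref{eq:ratioCond}, which is awkward here since the sandwich functions $\ul\alpha_p,\ol\alpha_p$ of \eqref{eq:VpBound} need not be comparable and $\varphi$ may grow super-polynomially. I first record the elementary facts about $\varphi$. Because $\psi(r)\le c_0 r$, the integrand satisfies $2c_0/\psi(r)\ge 2/r$, so $\int_1^\infty \tfrac{2c_0}{\psi(r)}\,dr=\infty$ and $\int_1^s \tfrac{2c_0}{\psi(r)}\,dr\to-\infty$ as $s\to0^+$; hence $\varphi$ is strictly increasing, of class $\cKinfty$, with $\varphi(0)=0$, while $\psi'(0)=0$ is exactly what renders $\varphi$ continuously differentiable at the origin, with $\varphi'(s)=\tfrac{2c_0}{\psi(s)}\varphi(s)$ for $s>0$.

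Since $\cA_0$ contains all of $\cP\times[0,N_0]$, we have $|\xi|_{\cA_0}=|(x,e)|$, and on $\tau\in[0,N_0]$ one has $1\le\exp(2c_0\zeta\tau)\le\exp(2c_0\zeta N_0)$. Combining this with \eqref{eq:VpBound} and the monotonicity of $\varphi$ gives $\varphi(\ul\alpha_p(|(x,e)|))\le W\le \exp(2c_0\zeta N_0)\varphi(\ol\alpha_p(|(x,e)|))$; taking the minimum and maximum over the finite set $\cP$ yields \eqref{boundV} with class $\cKinfty$ bounds, establishing the first requirement.

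For the flow along \eqref{eq:smallHybSysa}, only $\tau$ and $(x,e)$ evolve, and the coefficient $2c_0\zeta\exp(2c_0\zeta\tau)\varphi(V_p)$ multiplying $\dot\tau$ is nonnegative, so the worst case is $\dot\tau=1/\tau_a$. Using $\varphi'=\tfrac{2c_0}{\psi}\varphi$, the decay bound \eqref{eq:VpDecayBound}, and $\psi(V_p)\le\alpha_p(V_p)$, I factor out $2c_0\exp(2c_0\zeta\tau)\varphi(V_p)$ to reduce the flow derivative to this factor times a bracket bounded by $-\lambda+\gamma_p(|d|)/\psi(V_p)$, where $\lambda:=1-\zeta/\tau_a>0$ (this is where $\tau_a>\zeta$ enters). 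The exponentials cancel when I require the perturbation to absorb half the decay, namely $\gamma_p(|d|)\le\tfrac{\lambda}{2}\psi(V_p)$; using $V_p\ge\ul\alpha_p(|(x,e)|)$ this is implied once $|(x,e)|\ge\ul\alpha_p^{-1}\!\big(\psi^{-1}(\tfrac{2}{\lambda}\gamma_p(|d|))\big)=:\hat\gamma_p(|d|)\in\cK$, under which $\langle\nabla W,f\rangle\le -c_0\lambda\,\varphi(\ul\alpha_p(|(x,e)|))$. For the jump \eqref{eq:smallHybSysb}, the decrement $\tau^+=\tau-1$ supplies the factor $\exp(-2c_0\zeta)$, while \eqref{eq:VpJumpBound} and monotonicity give $\varphi(V_q(x^+,e^+))\le\varphi(\chi(V_p)+\rho(|d|))$. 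The decisive step is the identity $\ln\tfrac{\varphi(\chi(s)+\rho(|d|))}{\varphi(s)}=\int_{s}^{\chi(s)+\rho(|d|)}\tfrac{2c_0}{\psi(r)}\,dr$: whenever $\rho(|d|)\le\eps\chi(s)$ the upper limit is at most $(1+\eps)\chi(s)$, so by \eqref{condADT} this integral is at most $2c_0\zeta^*$, hence $\varphi(\chi(V_p)+\rho(|d|))\le\exp(2c_0\zeta^*)\varphi(V_p)$. Therefore, on the region $\rho(|d|)\le\eps\chi(V_p)$ (again a class-$\cK$ state threshold after inserting $V_p\ge\ul\alpha_p$), $W(g)\le\exp(-2c_0(\zeta-\zeta^*))W(\xi)$, a strict contraction since $\zeta>\zeta^*$, which furnishes the jump decay. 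Finally I merge the two regimes by taking $\hat\gamma$ as the maximum of the flow and jump thresholds and $\hat\alpha$ as the minimum of the two decay functions, giving the single pair $(\hat\alpha,\hat\gamma)$ demanded by the Definition.

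The main obstacle is the jump estimate: one must exploit the $(1+\eps)$ slack deliberately built into $\zeta^*$ in \eqref{condADT} to absorb the disturbance contribution $\rho(|d|)$ into the ``$\chi$-budget'' of the integral $\int_{s}^{\chi(s)+\rho(|d|)}\tfrac{2c_0}{\psi}$, so that the exponential gain incurred by $\varphi\circ\chi$ across a switch is strictly beaten by the contraction $\exp(-2c_0\zeta)$ from decrementing the timer. Getting this bookkeeping right — that the integral stays below $2c_0\zeta^*$ precisely on the region where the state dominates the disturbance — is the heart of the argument, and is what ties the construction of $\varphi$ together with the average-dwell-time bound.
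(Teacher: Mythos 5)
Your proposal is correct, and it reaches the conclusion by a genuinely different route than the paper. The paper verifies the dissipation-form conditions of Proposition~\ref{hybridISSprop}: for the flow it derives $\langle \nabla W, f\rangle \le -aW + \tfrac{2c_0 W}{\psi(V_p)}\gamma_p(|d|)$ and then checks the asymptotic-ratio condition \eqref{eq:ratioCond} (which is in fact immediate here, since $1/\psi(V_p)\to 0$ as $|(x,e)|\to\infty$); for the jump it applies the weak triangle inequality of \citep[Lemma~10]{Kell14}, $\varphi(\chi(V_p)+\rho(|d|)) \le \varphi((1+\eps)\chi(V_p)) + \varphi\bigl(\tfrac{1+\eps}{\eps}\rho(|d|)\bigr)$, to obtain the additive form \eqref{prop1Vjump} with the explicit gain $\widetilde\rho$. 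You instead verify the implication (gain-margin) form \eqref{eq:defIssVFlow} and its jump analogue directly, restricting to the region where the state dominates the disturbance; in particular, at jumps you absorb $\rho(|d|)$ into the upper limit of the integral defining $\varphi$ on the set $\rho(|d|)\le\eps\chi(V_p)$, rather than splitting $\varphi$ of a sum. Both arguments exploit exactly the same two mechanisms --- $\psi\le\alpha_p$ linearizes the flow decay of $W$ against the timer term, and the $(1+\eps)$ slack in \eqref{condADT} pays for the disturbance at switches --- and your case-splitting and the paper's Lemma~10 are the two standard, interchangeable ways of passing between the implication and dissipation formulations. What the paper's route buys is explicit ISS supply and gain functions in dissipation form (reused in the sampled-data analysis of Theorem~\ref{thm:stabSampSys}); what yours buys is a more self-contained argument that needs neither the asymptotic-ratio machinery nor the external lemma, at the cost of not producing those explicit gains. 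Your bookkeeping of the state-versus-disturbance thresholds (via $V_p\ge\ul\alpha_p(|(x,e)|)$) and the final merge into a single pair $(\widehat\alpha,\widehat\gamma)$ are sound.
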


\begin{remark}
To gain an insight about the constraints imposed by the stability condition \eqref{condADT} on the system structure, we study particular instances where $\alpha(s) := \min\{\alpha_p(s)\, \vert \, p \in \cP\}$ exhibits linear, super-linear and sub-linear growth. It can be seen that if the jump map $\chi$ does not grow too fast compared to $\alpha(s)$, then $\zeta^*$ in \eqref{condADT} is finite. For the sake of simplicity, let $\alpha(s) := a s^k$, with $a,k > 0$, and choose $c_0 = a$ in the definition of $\psi$.

\begin{itemize}[leftmargin=*]
\item {\it Linear decay:} We first consider the case $k = 1$, so that $\alpha(s) = a s$, and we let $\psi (s) = a s$, for $s \ge 0$. This gives $\zeta^* = \sup_{s\ge 0}\frac{1}{a} \log{\frac{(1+\eps)\chi(s)}{s}}$, which is finite if $\lim_{s\to\infty} \chi (s) = O(s)$, and $\lim_{s \to 0} \chi (s) = O(s)$. If $\chi(s) = \mu s$, then $\zeta^* = \frac{1}{a}\log((1+\eps)\mu)$, which resembles the bound given in \citep[Chapter~3]{Libe03} by taking $\eps > 0$ arbitrarily small.

\item {\it Super-linear decay:} Next consider the case where $\alpha(s) = a s^k$ with $k > 1$. Choose $c_1 =1$, then we can let $\psi(s) = a s^k$, for every {\color{blue}$s \in [0,c_1]$. Thus, $\zeta^*$ is the maximum between $\frac{1}{k-1}\sup_{s\in [0,c_1]} \left[\frac{1}{as^{k-1}} - \frac{1}{a\chi(s)^{k-1}}\right]$ and $\sup_{s\ge 1}\frac{1}{a} \log{\frac{(1+\eps)\chi(s)}{s}}$}. With $\chi \in \cK_\infty$, it is seen that $\zeta^*$ is finite and positive if $\lim_{s\to 0^+} \left[\frac{1}{as^{k-1}} - \frac{1}{a\chi(s)^{k-1}}\right] \le 0$, which holds if $\chi(s) = o(s)$ when $ s \to 0$, and {\color{blue}$\lim_{s\to\infty} \chi(s) = O(s)$.}

\item {\it Sub-linear decay:} Lastly, consider the case where $\alpha(s) = a s^k$ with $k < 1$. Choose $c_1 =1$, then there exist $\ul c < 1$, $\ol c >1$ and a continuously differentiable function $\psi$ such that  $\psi(s) = as$, $s\in [0,\ul c]$, and $\psi(s) = a s^k$ for $s \ge \ol c$. With this choice of $\psi$, the lower bound $\zeta^*$ is a finite positive scalar, if $\lim_{s\to 0^+} \chi(s) = O(s)$, and $\lim_{s\to\infty} \chi(s) = o(s)$.
\end{itemize}
\end{remark}

\begin{remark}\label{rem:phi}
A function similar to $\varphi$ defined in \eqref{eq:defPhi} also appears in \citep{PralWang96} to transform nonlinear decay rates to linear ones in inequalities associated with Lyapunov functions, while keeping the modified Lyapunov function differentiable. In the proof of Theorem~\ref{thm:mainISS}, this function serves the same purpose. Here, the construction of $\psi$ is modified slightly. 
\end{remark}

\begin{proof}[Proof of Theorem~\ref{thm:mainISS}]
The proof is based on showing that $W$ satisfies the conditions listed in Proposition~\ref{hybridISSprop}. It is seen that $\varphi$ is differentiable (away from the origin) on $\R_{>0}$, and it is shown in \cite[Lemma~12]{PralWang96} that the function $\varphi$ is also continuously differentiable in the neighborhood of the origin with $\varphi'(0) = 0$. Therefore, $W$ is also continuously differentiable.

Since $\varphi$ is a class $\cK_\infty$ function, one can easily verify that
\[
\ul \alpha(\vert \xi \vert_{\cA_0}) \le W(\xi) \le \ol\alpha(\vert \xi \vert_{\cA_0})
\]
for some functions $\ul \alpha$, $\ol \alpha$ of class $\cK_\infty$. From the definition of $\cA_0$ and the assumption that $f_{c,p}(0,0) = f_{o,p}(0,0) = 0$, $p \in \cP$, it immediately follows that for each $\xi \in \cA_0$ such that $(\xi,0) \in \cD$, we have $\cG(\xi,0) \in \cA_0$. Also, along any continuous motion resulting from \eqref{eq:smallHybSysa}, with initial condition $\xi \in \cA_0$ satisfying $(\xi,0)\in \cC$, the system trajectory stays within $\cA_0$. Hence, $\cA_0$ is forward invariant with $d = 0$.

Let $f$ be an element of $\cF(\xi,d)$. When $(\xi,d) \in \cC$, it follows from \eqref{eq:VpDecayBound} that 
\begin{multline*}
\left\langle \nabla W(\xi), f \right\rangle  \le \frac{2 {\color{blue}c_0}\zeta \exp({2 {\color{blue}c_0} \zeta \tau})}{\tau_a} \varphi(V_p(x,e))\\ 
 + \exp(2 {\color{blue}c_0} \zeta \tau) \varphi(V_p(x,e)) \left[-\frac{2{\color{blue}c_0}\alpha_p(V_p(x,e))} {\psi(V_p)} + \frac{2{\color{blue}c_0} \gamma_p(\vert d \vert)} {\psi(V_p)} \right]
 \end{multline*}
 and hence
 \[
 \left\langle \nabla W(\xi), f \right\rangle  \le
W(\xi) \left[ \frac{2 {\color{blue}c_0} \zeta}{\tau_a} - \frac{2 {\color{blue}c_0} \alpha_p (V_p)} {\psi(V_p) } + \frac{2 {\color{blue}c_0} \gamma_p(|d|)}{\psi(V_p)}\right].
\]
Since $\psi(s) \le \alpha_p(s)$ by construction, and $\zeta$ is chosen to satisfy $\tau_a >\zeta$, we get an $a:= 2{\color{blue}c_0}(1 - \zeta/\tau_a) > 0$ such that
\[
\left\langle \nabla W(\xi), f \right\rangle  \le - aW(\xi) + \frac{2 {\color{blue}c_0} W(\xi)}{\psi(V_p(x,e))} \gamma_p(|d|), \quad \xi \in \cC
\]
which is of the same form\footnote{The function $\frac{2W(\xi)}{\psi(V_p(x,e))}$ is obviously nonnegative. The continuity follows from recalling that $\varphi$ is continuously differentiable with $\varphi'(0) = 0$, and observing that $\frac{2W(\xi)}{\psi(V_p(x,e))} = \exp{(2\zeta\tau)}\frac{\varphi(V_p(x,e))}{\psi(V_p(x,e))} = \exp{(2\zeta\tau)} \varphi'(V_p(x,e))$.} as \eqref{prop1Vflow}. It is readily checked that the asymptotic ratio condition \eqref{eq:ratioCond} holds since
\[
\lim_{\vert(x,e)\vert \to \infty}  \frac{1}{\psi(V_p(x,e))} = 0.
\]

The next step is to show that \eqref{prop1Vjump} holds under condition \eqref{condADT} for the jump maps \eqref{eq:smallHybSysb}. Let $g$ denote an element of $\cG(\xi,d)$. It is seen that for each $(\xi,d) \in \cD$, that is, whenever $\tau \in [1,N_0]$, it follows from \eqref{eq:VpJumpBound} that
\begin{align}
\max_{g \in \cG} W^+ &= \ \max_{g \in \cG} \ \exp(2 {\color{blue}c_0} \zeta\tau^+) \varphi( V_{p^+}(x^+,e^+)) \notag \\
& \le \ \exp(2 {\color{blue}c_0} \zeta \tau - 2 {\color{blue}c_0}\zeta)  \varphi \left( \chi (V_p(x,e)) + \rho(|d|)\right), 
\end{align}
where $\chi$ and $\rho$ were introduced in \eqref{eq:defchi} and \eqref{eq:defrho}.
Take an $\eps > 0$ for which \eqref{condADT} holds, then it follows from~\cite[Lemma~10]{Kell14} that: 
\begin{multline*}
\varphi \left( \chi (V_p(x,e)) + \rho(|d|)\right) \le \varphi((1+\eps)\chi(V_p(x,e))) \\
+ \varphi \left(\frac{1+\eps}{\eps}\rho(|d|)\right).
\end{multline*}
For $(x,e) \neq 0$, the bound on the first term on the right-hand side is given by
\begin{align*}
&\varphi((1+\eps)\chi(V_p(x,e))) = \exp\left(\int_{1}^{(1+\eps)\chi(V_p(x,e))} \frac{2 {\color{blue}c_0} dr}{\psi(r)} \right)\\
&\quad = \exp\left(\int_{V_p(x,e)}^{(1+\eps)\chi(V_p(x,e))} \frac{2 {\color{blue}c_0} dr}{\psi(r)}\right) \cdot \exp\left( \int_{1}^{V_p(x,e)} \frac{2 {\color{blue}c_0} dr}{\psi(r)}\right)\\
& \quad \le \exp(2 {\color{blue}c_0} \zeta^*) \, \varphi (V_p(x,e))
\end{align*}
where $\zeta^*$ is defined as in \eqref{condADT}. Letting
\begin{equation}\label{eq:defrhotilde}
\widetilde{\rho}(s) :=\exp{(2 {\color{blue}c_0} \zeta N_0 - 2 {\color{blue}c_0} \zeta)} \ \varphi \left(\frac{1+\eps}{\eps}\rho(s)\right),
\end{equation}
and noting that, for $\tau \in [0,N_0]$,
\[
\widetilde \rho(s) \ge  \exp{(2 {\color{blue}c_0} \zeta \tau -2 {\color{blue}c_0} \zeta)} \ \varphi \left(\frac{1+\eps}{\eps}\rho(s)\right),
\]
we obtain
\begin{align*}
\max_{g \in \cG} W^+ & \leq \exp{(2 {\color{blue}c_0} \zeta^*-2 {\color{blue}c_0} \zeta)} \exp{(2{\color{blue}c_0}  \zeta \tau)}\varphi(V_p(x,e)) + \widetilde{\rho}(|d|) \\
& \leq \exp(2 {\color{blue}c_0} \zeta^* - 2 {\color{blue}c_0} \zeta) W (\xi) + \widetilde{\rho}(|d|).
\end{align*} 
Having chosen $\zeta$ such that $\zeta^*-\zeta < 0$, we see that \eqref{prop1Vjump} holds.
The result of Proposition~\ref{hybridISSprop} thus ensures that $W$ is an ISS Lyapunov function for \eqref{eq:smallHybSys} w.r.t. the set $\cA_0$, with input $d$.
\end{proof}

\subsection{Linear Case}
We use the following linear example to illustrate the Theorem~\ref{thm:mainISS}.
Assume that the system \eqref{cascnonlinearsys} is linear
\begin{subequations}\label{casclinearsys}
\begin{align}
\dot x = A_px + B_p e \label{linearsysx}  \\
\dot e = F_p e + G_p d \label{linearsyse} 
\end{align}
\end{subequations}
with the matrices $A_p,F_p$ being Hurwitz.
For the sake of simplicity, we assume that the states $(x,e)$ do not go through any jump dynamics, and remain unchanged at switching instances. This way, we let the maps in \eqref{eq:jumpSwSys} be
\[
g_c(x,e) = x, \quad \text { and } \quad g_o(e,d) = e.
\]
We can now choose quadratic Lyapunov functions to satisfy \ref{Voprop} and \ref{Vcprop}. This is done by computing symmetric positive definite matrices $P_{o,p}, P_{c,p} > 0$ such that, for each $p \in \cP$,
\begin{align*}
F_p^\top P_{o,p} + P_{c,p} F_p & \le -Q_{o,p} \\
A_p^\top P_{c,p} + P_{c,p} A_p & \le -Q_{c,p}
\end{align*}
for some symmetric positive definite matrices $Q_{o,p}, Q_{c,p} > 0$. By letting
\[
V_{c,p}(x) = x^\top P_{c,p} x, \quad \text{ and } \quad \text V_{o,p} = e^\top P_{o,p} \, e
\]
we get
\[
\aoplowbar |x|^2 \leq V_{o,p}(x) \leq \aopupbar |x|^2
\]
with $\aoplowbar = \lambda_{\min}(P_{o,p})$, and $\aopupbar = \lambda_{\max}(P_{o,p})$.
Similarly, it holds that 
\[
\acplowbar |x|^2 \leq V_{c,p}(x) \leq \acpupbar |x|^2
\]
with $\acplowbar = \lambda_{\min}(P_{c,p})$, and $\acpupbar = \lambda_{\max}(P_{c,p})$.
It can be readily shown that
\[
\left\langle\nabla V_{o,p}, F_px + G_p e\right\rangle \le - a_{o,p} V_{o,p}(x) + \gmaopupbar |d|^2
\]
by letting
\[
a_{o,p} = \frac{\lambda_{\min}(Q_{o,p})}{2\lambda_{\max}(P_{o,p})} \quad \text{ and } \quad \ol\gamma_{o,p} = 2 \frac{\| P_{o,p}G_{p} \|^2}{\lambda_{\min}(Q_{o,p})}
\]
and likewise
\[
\left\langle\nabla V_{c,p}, A_px + B_p e\right\rangle \le - a_{c,p} V_{c,p}(x) + \gmacpupbar V_{o,p}(e)
\]
with
\[
a_{c,p} = \frac{\lambda_{\min}(Q_{c,p})}{2\lambda_{\max}(P_{c,p})} \quad \text{ and } \quad \gmacpupbar = 2 \frac{\| P_{c,p} B_{p} \|^2}{\lambda_{\min}(Q_{c,p})}.
\]
For each $p \in \cP$, the function $\ol\nu_p(s)$ in \eqref{eq:defbarnu} turns out to be a constant as
\[
\ol\nu_p(s) = \frac{\ol \gamma_{c,p} \, s}{a_{o,p} \, s} = \frac{\ol \gamma_{c,p}}{a_{o,p}} =: \ol\nu_p.
\]
Thus, we can choose the Lyapunov function in \eqref{defVp} to be
\[
V_p(x,e) = 4 \, \ol\nu_p V_{o,p}(e) + V_{c,p} (x)
\]
which leads to
\[
\left\langle \nabla V_p(x,e), \begin{pmatrix} A_p x + B_p e \\ F_p e + G_p d \end{pmatrix}\right\rangle \le -a_p \, V_p(x,e) + \ol \gamma_p |d|^2
\]
in which
\begin{align*}
&a_p = \min \left\{a_{c,p}, 0.75 \, a_{o,p} \right\},\\
&\gmaupbar = 4\ol\nu_p \gmaopupbar.
\end{align*}
For the given jump maps at switching times, the maps in \eqref{eq:VpJumpBound} can be chosen such that $\rho \equiv 0$, and 
\begin{equation}\label{eq:defchibar}
\chi(s) = \overline{\chi} s, \quad  \overline{\chi} = \max_{p,q \in \cP} \left\{ \frac{\ol \nu_q}{\ol \nu_p} \frac{\lambda_{\max}(P_{o,q})}{\lambda_{\min}(P_{o,p})}, \frac{\lambda_{\max}(P_{c,q})}{\lambda_{\min}(P_{c,p})} \right\}.
\end{equation}
To construct the Lyapunov function $W$ in \eqref{eq:defWp}, we let
\[
a:= \min_{p\in\cP} \{a_p\}
\] 
so that $\psi(s) = a s$ satisfies the desired conditions with $c_0 = a$ and $c_1 > 0$ arbitrary. We now choose $W$ such that $W(x,e,p,\tau) = 0$ if $(x,e) = 0$, and for $(x,e) \neq 0$,
\begin{align*}
W(x,e,p,\tau) &= \exp(2 {\color{blue}a} \zeta \tau) \exp \left( \int_{1}^{V_p(x,e)} \frac{2 {\color{blue}a}\,dr}{a\, r} \right) \\
& = \exp ({2{\color{blue}a}\zeta \tau}) V_p^{{\color{blue}2}}(x,e).
\end{align*}
This construction leads to the following result:
\begin{cor}
The switched linear system \eqref{cascswisys} with subsystems described by \eqref{casclinearsys} is input-to-state stable with respect to the origin and disturbance $d$ if 
\[
\tau_a > \frac{1}{a} \ln(\overline \chi),
\]
where $a = \min_{p \in \cP}\{a_p\}$, and $\ol \chi$ is given in \eqref{eq:defchibar}.
\end{cor}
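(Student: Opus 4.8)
The plan is to obtain the corollary as a direct specialization of Theorem~\ref{thm:mainISS} to the linear data assembled above, the only genuine work being the explicit evaluation of the threshold $\zeta^*$ in \eqref{condADT}. First I would record that the preceding construction already verifies \ref{Voprop}, \ref{Vcprop}, \ref{Assumq} and \ref{assJump}: the quadratic functions $V_{o,p}$, $V_{c,p}$ yield linear decay and gain estimates, the ratio $\ol\nu_p=\gmacpupbar/a_{o,p}$ is constant so that \ref{Assumq} holds trivially, and the identity jump maps $g_c(x,e)=x$, $g_o(e,d)=e$ satisfy \ref{assJump} (with the associated $\widehat\rho_o\equiv 0$). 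Proposition~\ref{basicVp} then applies with $\alpha_p(s)=a_p s$, $\gamma_p(|d|)=\gmaupbar\,|d|^2$, $\chi(s)=\ol\chi\,s$ and $\rho\equiv 0$, where $\ol\chi$ is as in \eqref{eq:defchibar}.

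With $a=\min_{p\in\cP}a_p$ I would take $c_0=a$ and $\psi(s)=as$, which satisfies $\psi(s)\le\min\{c_0 s,\alpha_p(s)\mid p\in\cP\}$ with equality and renders $\varphi$ in \eqref{eq:defPhi} equal to $\varphi(s)=s^2$, since the integrand $2c_0/\psi(r)$ collapses to $2/r$. Hence the candidate $W$ in \eqref{eq:defWp} is precisely $W=\exp(2a\zeta\tau)\,V_p^2(x,e)$ as displayed, and $\varphi$ is evidently $C^1$ at the origin with $\varphi'(0)=0$, so the smoothness demanded in the proof of Theorem~\ref{thm:mainISS} holds here even though $\psi'(0)\neq 0$. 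The crucial computation is that, both $\psi$ and $\chi$ being linear, the integral in \eqref{condADT} is independent of $s$:
\[
\zeta^* = \sup_{s\ge 0}\int_{s}^{(1+\eps)\ol\chi\, s}\frac{dr}{a r} = \frac{1}{a}\ln\bigl((1+\eps)\ol\chi\bigr).
\]

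It then remains to match the hypothesis $\tau_a>\tfrac1a\ln(\ol\chi)$ of the corollary to the hypothesis $\tau_a>\zeta^*$ of Theorem~\ref{thm:mainISS}. Since $\zeta^*=\tfrac1a\ln(\ol\chi)+\tfrac1a\ln(1+\eps)\to\tfrac1a\ln(\ol\chi)$ as $\eps\to 0^+$, and $\tau_a$ strictly exceeds the limit, I would fix $\eps>0$ small enough that $\tau_a>\zeta^*$ and choose any $\zeta$ with $\tau_a>\zeta>\zeta^*$. Theorem~\ref{thm:mainISS} then guarantees that $W$ is an ISS Lyapunov function for the hybrid model \eqref{eq:smallHybSys} with respect to the set $\cA_0$ in \eqref{eq:defA0} and input $d$, so that the estimate \eqref{eq:defISS} holds for $|\xi|_{\cA_0}$.

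The last step is purely bookkeeping: since $\cA_0$ fixes only the $(x,e)$-block while leaving $(p,\tau)$ free over $\cP\times[0,N_0]$, one has $|\xi|_{\cA_0}=|(x,e)|$, so the $\cA_0$-ISS estimate is verbatim an ISS estimate for $|(x,e)|$; combined with the standard correspondence between switching signals $\sigma\in\cS_{\tau_a}$ and solutions of \eqref{eq:smallHybSys}, this yields the asserted bound on $|(x(t,j),e(t,j))|$ and proves ISS of \eqref{cascswisys} with respect to the origin. I do not expect any real obstacle: the entire content is the closed-form evaluation of $\zeta^*$ and the elementary $\eps\to 0^+$ continuity argument, while the passage from $\cA_0$ to the origin and the verification of the linear Lyapunov estimates are routine.
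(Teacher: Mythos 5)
Your proposal is correct and follows essentially the same route as the paper: the corollary is presented there as an immediate specialization of Theorem~\ref{thm:mainISS} to the linear construction (quadratic $V_{o,p}$, $V_{c,p}$, constant $\ol\nu_p$, $\psi(s)=as$ with $c_0=a$ so that $\varphi(s)=s^2$ and $W=\exp(2a\zeta\tau)V_p^2$), with $\zeta^*=\tfrac1a\ln((1+\eps)\ol\chi)$ and $\eps\to 0^+$. Your explicit remarks on $\psi'(0)\neq 0$ being harmless here and on $|\xi|_{\cA_0}=|(x,e)|$ are details the paper leaves implicit, but they do not change the argument.
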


\section{Output Feedback Stabilization with Sampling}\label{sec:sampling}
In this section, we will use the theoretical results from the previous section to study the problem of feedback stabilization with dynamic output feedback for switched nonlinear systems with time-sampled measurements. Our starting point is a nominal setup where an output feedback controller is already designed for each subsystem. We assume this controller to be robust with respect to measurement errors, and this property is formalized using ISS notion. Next, we implement these controllers when the output measurements sent by the plant, and the control inputs sent by the controller are time-sampled and subjected to zero-order hold between two updates. Our goal is to design the sampling algorithms for control signals and measurements such that the resulting closed-loop system is asymptotically stable. Inspired by the work of \citep{TanwTeel15}, we introduce the dynamic filters and event-based update rules to design the sampling algorithms. 

\subsection{Problem Setup}
The switched nonlinear plant which we want to stabilize, is described by
\begin{subequations}\label{cascadeFam3}
\begin{align}
&\dot x = f_{c,\sigma} (x, u)  \label{eq:planta}\\
& y = h_\sigma(x),
\end{align}
\end{subequations}
and the corresponding controller is described by
\begin{subequations}\label{samplingcontroller}
\begin{align}
&\dot z = f_{o,\sigma}(z, u, y) \label{eq:obsa}\\
& u = k_\sigma(z).\label{eq:obsb}
\end{align}
\end{subequations}
Here, for a finite index set $\cP$, $\sigma:\R_{\ge 0} \to \cP$ is the switching signal, and we emphasize that the controller is driven by the same switching signal as the plant. For each $p \in \cP$, the mappings $f_{c,p}:\R^n \times \R^{n_u} \to \R^n$, $h_p:\R^{n} \to \R^{n_y}$, $f_{o,p}:\R^n \times \R^{n_u} \times \R^{n_y} \to \R^n$, and $k_p:\R^n \to \R^{n_u}$ are assumed to be continuous on their respective domains. The underlying working principle behind the controller \eqref{samplingcontroller} is that the variable $z$ in \eqref{eq:obsa} acts as a full-state estimate for the variable $x$ governed by \eqref{eq:planta}. The dynamics of the estimation error, denoted $e:=z-x$, are assumed to be ISS with respect to measurement errors, as described in \ref{spvop} below. Also, the static feedback controller $k_p$ in \eqref{eq:obsb}, $p \in \cP$, has the property that the corresponding subsystem $\dot x = f_{c,p}(x,k_p(x))$ is ISS with respect to errors in measurement of $x$, as stated in \ref{spvcp}. With these properties, we can design sampling algorithms for output $y$ and input $u$, where the difference between the last updated value of the output (resp.~input) and its current value acts as an error in measurement. 

In our setup of sampling, we use a zero-order-hold between sampling times so that the inputs and outputs stay constant between two successive updates. Thus, we introduce two additional states in our model, with piecewise constant trajectories, which keep track of the sampled values. 
These are $x_d:\R_{\ge 0} \to \R^{n}$ and $z_d:\R_{\ge 0} \to \R^{n}$, and are to be seen as the time-sampled versions of $x$ and $z$ respectively. We set $y_d= h_\sigma(x_d)$ to denote the sampled values of the output that are sent to the controller, and $u_d = k_\sigma(z_d)$ to denote the sampled control values, which are sent to the plant. Whenever a new sampled value of the output (resp.~control input) is to be sent, we update $x_d^+ = x$ so that $y_d^+ = h_\sigma(x)$ (resp.~$z_d^+=z$ so that $u_d=k_\sigma(z)$). We emphasize that the plant and controller use the same value of the switching signal $\sigma$ at all times.

The overall schematic of the closed-loop system is given in Figure~\ref{fig:sampledSys}. We denote the measurement error due to sampling in the output by $d_y:= y_d - y$, and $d_z:= z_d - z$ denotes the error which appears in the plant dynamics due to sampling of the input. By constructing a Lyapunov function for the augmented system using the cascade principle, we next show that the state of system \eqref{cascadeFam3}-\eqref{samplingcontroller} converges to the origin for appropriately designed sampling algorithms.

The assumptions imposed on the nominal system \eqref{cascadeFam3}-\eqref{samplingcontroller} are now listed below:
\begin{assum}[leftmargin=*]
\item\label{alphah}
For each $p \in \cP$, there exists a class $\cK$ function $\alpha_{h,p}$ such that the function $h_p:\R^n \to \R^{n_y}$ satisfies: 
\[
|y| = |h_p(x)| \leq \alpha_{h,p}(|x|), \quad \forall p \in \cP, \forall x \in \R^n.
\]
\end{assum}

\begin{lyap}[leftmargin=*]
\item\label{spvop} There exist continuously differentiable functions $V_{o,p} : \R^n \to \Rposi$, $p \in \cP$, class $\cK$ function $\alpha_{o,p}$, and class $\cK_\infty$ functions $\overline{\alpha}_{o,p}$, $\underline{\alpha}_{o,p}$ such that, for every $(x,z,u,y,d_y) \in \R^{2n+n_u+2n_y}$,
\begin{subequations}
\begin{align}
&\underline{\alpha}_{o,p}(|e|) \leq V_{o,p}(e) \leq \overline{\alpha}_{o,p}(|e|) \label{spvopbound} \\
&\left\langle\frac{\partial V_{o,p}}{\partial e}(e) , f_{c,p} (x, u) - f_{o,p}(z, u, y + d_y) \right\rangle \leq \notag \\
&\quad - \alpha_{o,p}(V_{o,p}(e)) + \gamma_{o,p}(|d_y|), \label{spdotvopbound}
\end{align}
\end{subequations}
where $e = z - x$.
\end{lyap}

\begin{lyap}[leftmargin=*]
\item\label{spvcp} There exist continuously differentiable functions $V_{c,p} : \R^n \to \Rposi$, $p \in \cP$, class $\cK$ functions $\alpha_{c,p}$, $\gamma_{c,p}$, and class $\cK_\infty$ functions $\overline{\alpha}_{c,p}$, $\underline{\alpha}_{c,p}$ such that
\begin{subequations}
\begin{align}
&\underline{\alpha}_{c,p}(|x|) \leq V_{c,p}(x) \leq \overline{\alpha}_{c,p}(|x|)\label{spvcpbound} \\
&\left\langle \frac{\partial V_{c,p}}{\partial x} , f_{c,p}(x, k_p(x+e+d_z)) \right\rangle \leq - \alpha_{c,p}(V_{c,p}(x)) \notag \\ & \quad \quad + \gamma_{c,p}(V_{o,p}(e)) + \gamma_{c,p}(|d_z|),\label{spdotvcpbound} 
\end{align}
\end{subequations}
hold for every $(x,z,u,y,d_y) \in \R^{2n+n_u+2n_y}$.
\end{lyap}

For the class of plants and controllers satisfying the aforementioned hypotheses, we are now interested in designing the sampling algorithms, and characterizing the class of switching signals which result in an overall asymptotically stable system.

\subsection{Sampling Algorithms}

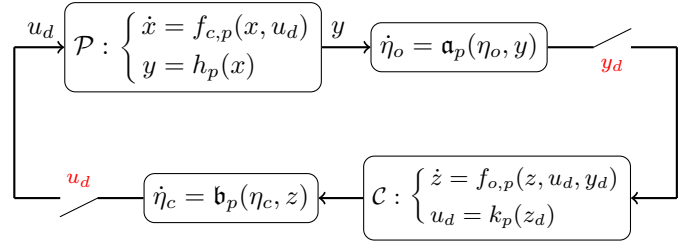
\begin{figure}[!t]
\begin{center}
\begin{tikzpicture}[circuit ee IEC,
every info/.style={font=\footnotesize}]
\draw (-2,0) node [rectangle, rounded corners, draw, minimum height =0.65cm, text centered] (sys) {$ \cP: \left\{ \begin{aligned}\dot x & = f_{c,p}(x,u_d) \\ y&= h_p(x)\end{aligned}\right .$};
\draw (sys.east) node[anchor=south west] {$y$};
\draw (sys.west) node[anchor=south east] {$u_d$};
\draw (2,-2) node [rectangle, rounded corners, draw, minimum height = 0.65cm, text centered] (obs) {\small $\cC: \left\{ \begin{aligned} &\dot z = f_{o,p}(z,u_d,y_d)\\  &u_d= k_p(z_d)\end{aligned}\right .$};
\draw (1.5,0) node [rectangle, rounded corners, draw, minimum height = 0.65cm, text centered] (zO) {$\dot \eta_o = \mathfrak{a}_p(\eta_o,y)$};
\draw (-1.5,-2) node [rectangle, rounded corners, draw, minimum height = 0.65cm, text centered] (zC) {$\dot \eta_c = \mathfrak{b}_p(\eta_c,z)$};
\coordinate (tr) at ([xshift=1.7cm)]zO.east);
\coordinate (br) at ([yshift=-2cm)]tr);
\coordinate (bl) at ([xshift=-1.7cm)]zC.west);
\coordinate (tl) at ([yshift=2cm)]bl);
\draw [thick,->] (sys.east)--(zO.west);
\draw [thick,->] (obs.west)--(zC.east);
\draw [thick] (zO.east) to [make contact={info'={[red]$y_d$}}] (tr);
\draw [thick] (tr)--(br);
\draw [thick,->] (br) -- (obs.east);
\draw [thick] (zC.west) to [make contact={info'={[red]$u_d$}}] (bl);
\draw [thick] (bl)--(tl);
\draw [thick,->] (tl) -- (sys.west);

\end{tikzpicture}
\caption{Nonlinear switched systems with sampled-data control.}
\label{fig:sampledSys}
\end{center}
\end{figure}

As mentioned in the introduction, we are interested in analyzing the stability of the closed-loop system under event-based sampling rules.  To do so, the auxiliary signals $x_d$, $z_d$ are thus modeled as
\begin{align*}
\begin{cases}
\dot x_d = 0,\\
\dot z_d = 0,
\end{cases}
\qquad
\begin{cases}
x_d^+ = x, & \text{if $\et_1 = \true$}\\
z_d^+ = z, & \text{if $\et_2 = \true$}
\end{cases}
\end{align*}
and by setting $y_d = h_\sigma(x_d)$ and $u_d = k_\sigma(z_d)$, the dynamics of the system with time-sampled inputs and outputs are given by
\begin{align*}
\dot x & = f_{c,\sigma} (x, u_d) =  f_{c,\sigma} (x, k_\sigma(z_d))\\
\dot z &= f_{o,\sigma}(z, u_d, y_d) = f_{o,\sigma}(z, k_\sigma(z_d), h_\sigma(x_d)).
\end{align*}

To define the events at which the outputs and inputs are updated, we introduce the following dynamic filters:
\begin{subequations}\label{dotetaoetac}
\begin{align}
\dot \eta_o &:= -\beta_{o,p}(\eta_o) + \rho_{o,p}(|y|) + \gamma_{o,p}(|h_p(x) - h_p(x_d)|)\label{defdotetao} \\
\dot \eta_c &:= -\beta_{c,p}(\eta_c) + \rho_{c,p} \left(\frac{|z|}{2}\right) + \gamma_{c,p}(|z- z_d|)
\end{align}
\end{subequations}
where, for each $p \in \cP$, $\beta_{o,p},\beta_{c,p}, \rho_{o,p}, \rho_{o,p}, \gamma_{o,p}, \gamma_{c,p}$ are class $\cKinfty$ functions, and the initial conditions for $\eta_o$ and $\eta_c$ are chosen to be some positive numbers. We say that
\begin{equation}\label{eq:jumpCond}
\begin{cases}
\et_1 = \true & \text{if } \vert y - y_d\vert \ge \mu_{o,\sigma}(\eta_o)\\
\et_2 = \true & \text{if } \vert z - z_d \vert \ge \mu_{c,\sigma}(\eta_o),
\end{cases}
\end{equation}
for some class $\cK_\infty$ functions $\mu_{o,p}, \mu_{c,p}$.
Note that $\et_1$ and $\et_2$ may occur at different times, or simultaneously, and that each one corresponds to a different update rule.

\subsection{Hybrid Model}
Just as we did in Section~\ref{sec:plainHybSys}, it is convenient to write the entire system with controlled plant dynamics, controller, and sampling algorithms using the framework of hybrid systems.
To do so, we first introduce
 \[
 \xi := (x, z, x_d, z_d, \eta_o, \eta_c, p, \tau) \in \R^{4n+ 2} \times \cP\times [0,N_0]
 \]
 to describe the state the closed-loop system. The variables $u_d$ and $y_d$ are obtained by setting $u_d = k_p(z_d)$ and $y_d=h_p(x_d)$.

The flow set is now described as
\begin{align}
\cC &=\left\{\xi: |y-h_p(x_d)| \leq \sigetao \right\} \cap \left\{\xi: |z-z_d| \leq \sigetac \right\} \notag \\
&\quad \cap \left\{\xi: \eta_o \geq 0 \wedge \eta_c \geq 0\right\} \cap \left\{\xi: \tau \in [0,N_0] \right\}, \notag
\end{align}
so that the state variables evolve according to a differential equation/inclusion when $\xi \in \cC$.
By construction, jumps in at least one of the state variables occur either due to switching, or when the condition for $\et_i$, $i = 1,2$, holds true.
The jump set $\cD$ therefore corresponds to the switching event, or the sampling event, and they may not occur at the same time. The jump set is defined as
\begin{equation}
\cD = \cD_{\sw} \cup \cD_{u} \cup \cD_y
\end{equation}
where
\begin{equation}
\begin{aligned}
\cD_{\sw} & := \left\{\xi: \tau \in [1,N_0]\right\} \\
\cD_{u} &:= \left\{\xi: |z-z_d| \geq \sigetac \right\} \\
\cD_y &:= \left\{\xi: |y-h_p(x_d)| \geq \sigetao \right\}
\end{aligned}
\end{equation}
so that the variables may get updated instantaneously when $\xi \in \cD$.
The evolution equations for the augmented variable $\xi$ can now be described as follows:
\begin{subequations}\label{eq:bigHybSys}
\begin{align}
&\xi \in \cC:
\begin{cases}
\dot x = f_{c,p}(x, k_p(z_d))\\
\dot z = f_{o,p}(z, k_p(z_d), h_p(x_d))\\
\dot x_d = 0 \\
\dot z_d =0 \\
\dot \eta_o = -\beta_{o,p}(\eta_o) + \rho_{o,p}(|y)|) + \gamma_{o,p}(\vert y-y_d\vert) \\
\dot \eta_c = -\beta_{c,p}(\eta_c) + \rho_{c,p}(\frac{|z|}{2}) + \gamma_{c,p}(|z- z_d|) \\
\dot p = 0 \\
\dot \tau \in [0, \frac{1}{\tau_a}]\\
\end{cases}
\end{align}
where, in the description of $\eta_o$-dynamics, we recall that $y = h_p(x)$ and $y_d(x) = h_p(x_d)$. The jump maps for this system are:
\begin{align}
&\xi \in \cD:
\begin{cases}
\xi^+ \in \cG(\xi) = \cG_{\sw} (\xi) \cup \cG_u(\xi) \cup \cG_y(\xi)
\end{cases}
\end{align}
\end{subequations}
\begin{equation}
\cG_{\sw}(\xi) = \begin{bmatrix} x\\z\\ x \\ z \\ \eta_o\\ \eta_c\\ \cP\setminus\{p\} \\ \tau-1\end{bmatrix},
\cG_u(\xi) = \begin{bmatrix} x \\z \\ x_d \\ z\\ \eta_o\\ \eta_c\\ p \\ \tau\end{bmatrix},
\cG_y(\xi) = \begin{bmatrix} x\\z \\ x \\ z_d \\ \eta_o\\ \eta_c\\ p \\ \tau\end{bmatrix}.
\end{equation}
It is noted that this system satisfies the basic assumptions required for the existence of solutions \cite[Assumption~6.5]{GoebSanf09}.
We are interested in asymptotic stability of the compact target set $\cA$ defined as
\begin{equation}\label{eq:defSetA}
\cA:= \left \{ 0  \right\}^{4n+2} \times \cP \times [0,N_0]
\end{equation}
for the hybrid system \eqref{eq:bigHybSys}. Our design problem can thus be formulated as follows:

{\it Problem statement:} For each $p \in \cP$, find the design functions $\beta_{o,p},\beta_{c,p}, \rho_{o,p}, \rho_{c,p}, \gamma_{o,p}, \gamma_{c,p}, \mu_{o,p}, \mu_{c,p}$ appearing in \eqref{dotetaoetac}, \eqref{eq:jumpCond}, and the lower bound on the average dwell-time $\tau_a$, such that the set $\cA$ defined in \eqref{eq:defSetA} is globally asymptotically stable for the hybrid system \eqref{eq:bigHybSys}.

\subsection{Stability Analysis}
To state the main result of this section on asymptotic stability of the set $\cA$ for system~\eqref{eq:bigHybSys}, we introduce the design criteria that must be satisfied by the functions introduced in the sampling algorithms \eqref{dotetaoetac}, \eqref{eq:jumpCond}.
Recalling the function $\nu_p$ introduced in \eqref{defVp}, and the the hypotheses \ref{spvop}, \ref{spvcp}, the following conditions are imposed on the functions
$\beta_{o,p}$, $\beta_{c,p}$, $\mu_{o,p}$, $\mu_{c,p}$, $\rho_{o,p}$ and $\rho_{c,p}$, for each $p \in \cP$:
\begin{design}[leftmargin=*]
\item\label{Dalphah} $\beta_{o,p}$ and $\beta_{c,p}$ are differentiable functions of class $\cK$.
\end{design}
\begin{design}[leftmargin=*]
\item\label{Dsigmaoc} Let $\theta_p$ be a function of class $\cKinfty$ defined as:
\[
\theta_{p}(s) := \alpha_{o,p}\inv (2\gamma_{o,p}(s)).
\]
Choose the functions $\mu_{o,p}$ and $\mu_{c,p}$ such that, for some $\lambda \in (0,1)$,
\[
(\gamma_{o,p} \circ \mu_{o,p})(s) [1 + (\nu_p \circ \theta_{p} \circ \mu_{o,p})(s)] \leq (1- \lambda) \beta_{o,p}(s)
\]
\[
2(\gamma_{c,p} \circ \mu_{c,p}) (s) \leq (1 - \lambda)\beta_{c,p}(s).
\] 
\end{design}

\begin{design}[leftmargin=*]
\item\label{Drhoc} The functions $\rho_{o,p}$ and $\rho_{c,p}$ in \eqref{dotetaoetac} are positive definite and are chosen such that for each $s \geq 0$: 
\[
\rho_{o,p} \circ \alpha_{h,p} \circ \underline{\alpha}_{c,p}\inv(s) \leq 0.5(1 - \lambda)\alpha_{c,p}(s)
\] \[
\rho_{c,p}(s) \leq (1-\lambda) \min \left\{ \gamma_{c,p}(s), 0.5 \, \alpha_{c,p}(\underline{\alpha}_{c,p}(s)) \right\}.
\]
\end{design}

It can be guaranteed that there always exists a solution to the inequalities in \ref{Dalphah}, \ref{Dsigmaoc}, \ref{Drhoc} using the properties of $\cK_\infty$ functions and the results given in \citep[Corollary 3.2]{GeisWirt14} and \citep{Kell14}.

To state the main result, we recall the definition of $\ell_p$ from \eqref{defl} and choose $\widetilde{\alpha}_p \in \cK_\infty$ such that $\widetilde{\alpha}_p(s) = $
\[
\min \left\{ \beta_{o,p}\left(\qtr s \right), \beta_{c,p}\left(\qtr s \right), \alpha_{c,p}\left(\qtr s \right), \gamma_{c,p}  \left( \qtr \ell_p \inv (s) \right) \right\}. 
\]
The function $\psi$ is chosen to be a differentiable $\cK_\infty$ function, with $\psi'(0) = 0$, and
\begin{subequations}\label{eq:psiSamp}
\begin{align}
& \psi(s) \le \min\{c_0s, \widetilde \alpha_p(s) \, \vert \, p \in \cP\}, \quad {\color{blue} s \ge 0},\\
{\color{blue}\text{\sout{and}}} \quad & {\color{blue}\hbox{\sout{$\psi(s) \le \min\{\widetilde \alpha_p(s) \, \vert \, p \in \cP\}, \quad s \ge c_1$}}},
\end{align}
\end{subequations}
for some $c_0> 0$. Finally, let
\begin{equation}\label{eq:chiSamp}
\chi(s) :=  \max_{p,q \in \cP} \left\{ \ell_q \circ \overline{\alpha}_{o,q} \circ \ul\alpha_p\inv (s) + \overline{\alpha}_{c,q} \circ \ul\alpha_p\inv (s) \right\}.
\end{equation}

\begin{thm}\label{thm:stabSampSys}
Consider the system \eqref{eq:bigHybSys} and assume that \ref{alphah}, \ref{Assumq}, \ref{spvop}, \ref{spvcp} hold. Suppose that the sampling algorithms \eqref{dotetaoetac} and \eqref{eq:jumpCond} are designed such that \ref{Dalphah}, \ref{Dsigmaoc}, \ref{Drhoc} are satisfied for some $\lambda \in (0,1)$. If the average dwell-time $\tau_a$ satisfies: 
\begin{equation}\label{condADTSamp}
\lambda \tau_a > \zeta^*:= \sup_{s \ge 0} \, \int_{s }^{\chi(s)} \frac{dr}{\psi(r)}
\end{equation}
for $\psi$ and $\chi$ given in \eqref{eq:psiSamp} and \eqref{eq:chiSamp}, then the set $\cA$ given in \eqref{eq:defSetA} is globally asymptotically stable for the system \eqref{eq:bigHybSys}.
\end{thm}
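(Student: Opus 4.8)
The plan is to reproduce the architecture of the proof of Theorem~\ref{thm:mainISS}, now folding the two filter states into the cascade Lyapunov function. Recalling $\ell_p$ from \eqref{defl}, I would work with the per-mode function
\[
\widetilde V_p := \ell_p(V_{o,p}(e)) + V_{c,p}(x) + \eta_o + \eta_c, \qquad e:=z-x,
\]
and set $W(\xi) := \exp(2c_0\zeta\tau)\,\varphi(\widetilde V_p)$, with $\varphi$ built from $\psi$ as in \eqref{eq:defPhi}, \eqref{eq:psiSamp}. Exactly as in Theorem~\ref{thm:mainISS}, $W$ is continuously differentiable (including at the origin, since $\varphi'(0)=0$, with the differentiability of $\beta_{o,p},\beta_{c,p}$ from \ref{Dalphah} keeping the filter dynamics smooth) and admits $\cKinfty$ sandwich bounds in $|\xi|_\cA$; one point that needs care is that $W$ does not depend on the held variables $x_d,z_d$, so to obtain a genuine lower bound on $\cC\cup\cD\cup\cG(\cD)$ I would invoke the defining inequalities of $\cC$, which slave $x_d,z_d$ to $(x,z,\eta_o,\eta_c)$. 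Since the closed loop carries no exogenous input, verifying the flow and jump conditions of Proposition~\ref{hybridISSprop} with $d\equiv0$ reduces to showing a strict decrease, and hence yields global asymptotic stability of $\cA$.

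The heart of the argument is the flow estimate on $\cC$. Differentiating $\widetilde V_p$ along \eqref{eq:bigHybSys} and using \ref{spvop}, \ref{spvcp} reproduces, term by term, the computation in the proof of Proposition~\ref{basicVp} with the external disturbance replaced by the two sampling errors $d_y,d_z$: the $\ell_p\circ V_{o,p}$ derivative carries the factor $\nu_p(V_{o,p}(e))$ multiplying $\gamma_{o,p}(|d_y|)$, the $V_{c,p}$ derivative contributes the cross term $\gamma_{c,p}(V_{o,p}(e))$ together with $\gamma_{c,p}(|d_z|)$, and the filter derivatives add $-\beta_{o,p}(\eta_o)-\beta_{c,p}(\eta_c)$, the self-excitation terms $\rho_{o,p}(|y|),\rho_{c,p}(|z|/2)$, and a second copy each of $\gamma_{o,p}(|d_y|),\gamma_{c,p}(|d_z|)$. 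On $\cC$ the triggering thresholds force $|d_y|\le\mu_{o,p}(\eta_o)$ and $|d_z|\le\mu_{c,p}(\eta_c)$, and the design inequalities are then applied exactly where they were engineered to act. Condition \ref{Dsigmaoc}, whose bracket $[1+(\nu_p\circ\theta_p\circ\mu_{o,p})]$ is tailored to dominate simultaneously the $\nu_p$-weighted occurrence of $\gamma_{o,p}(|d_y|)$ in the $\ell_p\circ V_{o,p}$ derivative and its unweighted occurrence in $\dot\eta_o$ — splitting into the two cases $\gamma_{o,p}(|d_y|)\le\tfrac12\alpha_{o,p}(V_{o,p}(e))$ or not, as in Proposition~\ref{basicVp} — leaves the residual decay $-\lambda\beta_{o,p}(\eta_o)$; the factor-two version does the same for $d_z$, leaving $-\lambda\beta_{c,p}(\eta_c)$. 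Condition \ref{Drhoc} absorbs $\rho_{o,p}(|y|)$ and $\rho_{c,p}(|z|/2)$ into $-\alpha_{c,p}(V_{c,p}(x))$ and into the surplus $-\gamma_{c,p}(V_{o,p}(e))$ created (as in Proposition~\ref{basicVp}) by the choice $\nu_p\ge4\ol\nu_p$, the sandwich bounds \eqref{spvopbound}--\eqref{spvcpbound} being used to line up the comparison functions.

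After these cancellations one is left with
\[
\dot{\widetilde V}_p \le -\lambda\gamma_{c,p}(V_{o,p}(e)) - \lambda\alpha_{c,p}(V_{c,p}(x)) - \lambda\beta_{o,p}(\eta_o) - \lambda\beta_{c,p}(\eta_c).
\]
Because $\widetilde V_p$ is a sum of four nonnegative parts, at least one part exceeds $\tfrac14\widetilde V_p$; matching that part to the corresponding term of $\widetilde\alpha_p$ (using the concavity of $\ell_p^{-1}$ to compare $V_{o,p}(e)$ with $\tfrac14\ell_p^{-1}(\widetilde V_p)$) gives $\dot{\widetilde V}_p\le-\lambda\widetilde\alpha_p(\widetilde V_p)$, with $\widetilde\alpha_p$ the four-term minimum introduced just before the theorem; this is precisely why the threshold $\tfrac14$ and the extra $\beta_{o,p},\beta_{c,p}$ entries appear there. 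Substituting into $\dot W$ and using $\psi\le\widetilde\alpha_p$ together with $\varphi'=\tfrac{2c_0}{\psi}\varphi$ yields $\dot W\le-2c_0(\lambda-\zeta/\tau_a)W$, a strict flow decay whenever $\lambda\tau_a>\zeta$, and such a $\zeta\in(\zeta^*,\lambda\tau_a)$ exists by \eqref{condADTSamp}.

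For the jumps I would treat the three pieces of $\cD$ separately. On $\cD_u$ and $\cD_y$ only $z_d$ or $x_d$ is refreshed, and since $\widetilde V_p$ involves neither (nor $p,\tau,\eta_o,\eta_c$), these sampling jumps leave $W$ unchanged. On $\cD_{\sw}$ the mode switches $p\to q$ and $\tau\to\tau-1$ while $x_d,z_d$ are reset to $x,z$ and $(x,e,\eta_o,\eta_c)$ are frozen, so the filter contribution $\eta_o+\eta_c$ passes through untouched and the state part $V^{xe}_p:=\ell_p(V_{o,p}(e))+V_{c,p}(x)$ obeys the growth bound $\chi$ of \eqref{eq:chiSamp}. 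The $\varphi$-integral estimate of Theorem~\ref{thm:mainISS} then applies: writing $\widetilde V_q^+\le\chi(V^{xe}_p)+\eta_o+\eta_c$ and substituting $r\mapsto r+\eta_o+\eta_c$, monotonicity of $\psi$ gives
\[
\int_{V^{xe}_p+\eta_o+\eta_c}^{\chi(V^{xe}_p)+\eta_o+\eta_c}\frac{dr}{\psi(r)}\le\int_{V^{xe}_p}^{\chi(V^{xe}_p)}\frac{dr}{\psi(r)}\le\zeta^*,
\]
hence $W^+\le\exp(2c_0(\zeta^*-\zeta))W<W$. Thus $W$ decays exponentially along flows, strictly decreases across switches, and is non-increasing across samplings; invoking the hybrid Lyapunov theorem (Proposition~\ref{hybridISSprop} with $d\equiv0$, equivalently \citep{CaiTeel09}) — after checking completeness of solutions and, crucially, that the dynamic filters started at positive values keep the inter-sample times uniformly bounded below so that no Zeno behaviour occurs, as in \citep{TanwTeel15} — gives global asymptotic stability of $\cA$. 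I expect the flow estimate to be the main obstacle, and within it the bookkeeping that lets the single inequality \ref{Dsigmaoc} dominate both the weighted and the unweighted appearances of $d_y$; once the filter states are seen to pass through switches untouched, the jump analysis is a routine adaptation of Theorem~\ref{thm:mainISS}.
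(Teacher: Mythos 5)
Your flow estimate and your treatment of the switching jumps track the paper's proof essentially verbatim: the same $W(\xi)=\exp(2c_0\zeta\tau)\,\varphi(V_p(x,e)+\eta_o+\eta_c)$, the same use of \ref{Dsigmaoc}--\ref{Drhoc} to cancel the weighted and unweighted occurrences of the sampling errors against $\beta_{o,p},\beta_{c,p},\alpha_{c,p}$ and the surplus $-\gamma_{c,p}(V_{o,p}(e))$, the same $\widetilde\alpha_p$ bookkeeping, and the same $\varphi$-integral estimate across switches (including the correct observation that $\cG_u,\cG_y$ leave $W$ unchanged). The gap is in how you close the argument. You propose to recover a genuine lower bound $\ul\alpha(\vert\xi\vert_{\cA})\le W(\xi)$ by ``slaving'' $x_d,z_d$ to $(x,z,\eta_o,\eta_c)$ through the inequalities defining $\cC$, and then to invoke Proposition~\ref{hybridISSprop}. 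This does not work. First, the flow constraint $|h_p(x)-h_p(x_d)|\le\mu_{o,p}(\eta_o)$ bounds $h_p(x_d)$, not $x_d$: under \ref{alphah} alone $h_p$ need be neither injective nor proper, so $|x_d|$ can be arbitrarily large on $\cC$ while $W$ is small. Second, even if $h_p$ were proper, the sandwich bound \eqref{boundV} must hold on all of $\cC\cup\cD\cup\cG(\cD)$, and on $\cD_{\sw}=\{\tau\in[1,N_0]\}$ there is no constraint whatsoever relating $(x_d,z_d)$ to the rest of the state: taking $x=z=0$, $\eta_o=\eta_c=0$ and $(x_d,z_d)$ large gives $W=0$ with $\vert\xi\vert_{\cA}$ large. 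Hence $W$ certifies attractivity only of the larger set $\cA_c=\{0\}^{2n}\times\R^{2n}\times\R_{\ge0}^2\times\cP\times[0,N_0]$, and your route proves nothing about $(x_d,z_d)$.

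The paper therefore needs, and supplies, an additional argument that your proposal omits entirely: after establishing convergence to $\cA_c$, it truncates the system to a compact set of $(x_d,z_d)$ values, applies the hybrid invariance principle to obtain pre-GAS of the set where $(x,z,\eta_o,\eta_c)=0$, and then uses the cascade result for hybrid systems to propagate this to $x_d=z_d=0$ --- the point being that on the residual invariant set $\eta_o=\eta_c=0$ forces $y=y_d$ and $z=z_d$ through the definitions of $\cC$ and $\cD$; boundedness of solutions and forward completeness then upgrade pre-GAS to GAS. Your concern about Zeno behaviour is not the missing piece (pre-GAS tolerates Zeno solutions; only completeness is needed for the final upgrade, which the paper argues by viability of flows and jumps), whereas the invariance/cascade step that pulls the held variables to zero is indispensable and absent from your proof.
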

The fundamental idea behind the proof is to first construct a weak Lyapunov function $W$ for system \eqref{eq:bigHybSys} with respect to set $\cA$ in \eqref{eq:defSetA}. Using additional arguments based on cascade hybrid systems, it is then shown that the solutions along which the derivative of $W$ is possibly zero, also converge to the set $\cA$.
\begin{proof}
We start with the function
\begin{equation}\label{constwp2}
W(\xi) := \exp{({\color{blue}2 c_0}\zeta \tau)} \varphi(V_p(x,e) + \eta_o + \eta_c) 
\end{equation}
where $\zeta \in (\zeta^* , \lambda \tau_a)$ with $\zeta^* $ given in \eqref{condADTSamp}, $\varphi$ is defined in \eqref{eq:defPhi}, and the function $V_p$ is defined as in \eqref{defVp}. It is noted that $W$ does not involve the variables $(x_d, z_d) \in \R^{2n}$, so we only have the bounds
\begin{equation}\label{eq:Wsamp}
\ul\alpha (\vert \xi \vert_{\cA_c}) \le W (\xi) \le \ol\alpha (\vert \xi \vert_{\cA_c})
\end{equation}
where $\cA_c := \{0\}^{2n} \times \R^{2n} \times \R_{\ge 0}^2 \times \cP \times [0,N_0]$, and $\ul\alpha$, $\ol\alpha$ are some class $\cK_\infty$ functions.
When $\xi \in \cC$, we have the derivative of $W$:
\begin{equation}\label{eq:dotWpSamp}
\dot W = W \left[2{\color{blue}c_0} \zeta \dot \tau + \frac{2{\color{blue}c_0}} {\psi(V_p + \eta_o + \eta_c) }  \left(\dot V_p + \dot \eta_o + \dot \eta_c \right) \right].
\end{equation}
To show that $\dot W$ is bounded by a negative semidefinite function, we compute bounds on $\dot V_p, \dot \eta_o$ and $\dot \eta_c$ in the flow set $\cC$.

Using \ref{Assumq}, \ref{spvop}, \ref{spvcp} and the inequality derived in \eqref{eq:dlVo}, we obtain
\begin{multline}\label{dotVpsample}
\dot V_p \leq -\demi \nu_p(V_{o,p}(e)) \alpha_{o,p}(V_{o,p}(e))\\
  + \nu_p(\theta_p(\vert y - y_d\vert))\gamma_{o,p}(|y - y_d|)  \\
 - \alpha_{c,p}(V_{c,p}(x)) + \gamma_{c,p} (V_{o,p}(e)) + \gamma_{c,p}(|z-z_d|).
\end{multline}
It follows from the definition of $\nu_p$, sampling condition \eqref{eq:jumpCond}, and \eqref{dotVpsample} that 
\begin{align}\label{spdotvp}
\dot V_p &\leq - \gamma_{c,p} (V_{o,p}(e)) + \nu_p(\theta_p(\sigetao)) \gamma_{o,p}(\sigetao)\notag \\
&\quad - \alpha_{c,p}(V_{c,p}(x)) + \gamma_{c,p}(\sigetac).
\end{align}
The derivative of $\eta_o$ is seen to satisfy
\begin{align}\label{dotetao}
\dot \eta_o &\leq -\beta_{o,p}(\eta_o) + \rho_{o,p} \circ \alpha_{h,p}(|x|) + \gamma_{o,p}(\sigetao) \notag \\
&\leq -\beta_{o,p}(\eta_o) + \rho_{o,p} \circ \alpha_{h,p} \circ \underline{\alpha}_{c,p}\inv (V_{c,p}(x)) + \gamma_{o,p}(\sigetao).
\end{align}
The derivative of $\eta_c$ can be bounded as follows:
\begin{align}\label{dotetac}
\dot \eta_c &\leq - \beta_{c,p}(\eta_c) + \rho_{c,p}\left( \frac{|x| + |e|}{2} \right) + \gamma_{c,p}(\sigetac) \notag \\
&\leq - \beta_{c,p}(\eta_c) + \rho_{c,p} \circ \underline{\alpha}_{c,p}\inv (V_{c,p}(x)) \notag \\
&\quad + \rho_{c,p} \circ \underline{\alpha}_{o,p}\inv (V_{o,p}(e)) + \gamma_{c,p}(\sigetac).
\end{align}
Now combining \eqref{spdotvp}, \eqref{dotetao}, \eqref{dotetac}, and using the inequalities given in \ref{Dalphah}, \ref{Dsigmaoc}, and \ref{Drhoc}, we get
\begin{align*}
\dot V_p + \dot \eta_o + \dot \eta_c &\leq -\lambda [ \beta_{o,p}(\eta_o) + \beta_{c,p}(\eta_c) + \alpha_{c,p}(V_{c,p}(x)) \notag \\
& \quad + \gamma_{c,p} \inv \circ \ell_p\inv (\ell_p(V_{o,p}(e))) ] \\
&\le -\lambda \widetilde{\alpha}_p(V_p + \eta_o + \eta_c).
\end{align*}
Substituting this expression in \eqref{eq:dotWpSamp}, and using the definition of $\psi$ in \eqref{eq:psiSamp}, we obtain
\begin{align*}
\dot W &\leq {\color{blue}c_0} W \left[ 2 \zeta \dot \tau - \frac{2 \lambda \widetilde{\alpha}_p(V_p + \eta_o + \eta_c)}{\psi(V_p + \eta_o + \eta_c)} \right] \\
&\le {\color{blue}c_0} W( 2 \zeta \dot \tau - 2 \lambda) \\
&\le {\color{blue}c_0} W \left(\frac{2 \zeta}{\tau_a} - 2 \lambda \right) \\
&= {\color{blue}c_0} W \left(\frac{2(\zeta - \lambda\tau_a )}{\tau_a} \right),
\end{align*}
which is the desired inequality for $\dot W$ over the flow set since we chose $\zeta < \lambda\tau_a$.

When $\xi \in \cD$, we calculate the maximum of $W(\xi^+)$, over the set $\cG(\xi) \ni g = \xi^+$, as follows:
\begin{align*}
\max_{g\in\cG(\xi)} W(g) &= \max_{g\in\cG(\xi)} \exp(2 {\color{blue}c_0} \zeta \tau^+) \varphi\left( V_{p^+}(x^+,e^+) + \eta_o^+ + \eta_c^+ \right) \\
&= \max_{p^+\in \cP} \exp(2 {\color{blue}c_0} \zeta \tau - 2 {\color{blue}c_0} \zeta) \varphi\left(V_{p^+}(x,e) + \eta_o + \eta_c \right). \notag \\
\end{align*}
To get a bound on the right-hand side in terms of the value of $\xi$ just prior to the jump, we recall that the function $\chi$ introduced in \eqref{eq:chiSamp} satisfies
\[
V_q \leq \chi(V_p), \quad \forall p,q \in \cP.
\]
Moreover, from the definition of $\varphi$, it follows that
\[
 \varphi\left(V_{p^+}  + \eta_o + \eta_c \right) \leq \exp \left( \int_{1}^{\chi(V_p)+\eta_o + \eta_c} \frac{2 {\color{blue}c_0} \, dr}{\psi(r)} \right).
\]
We then observe that
\begin{align}
\int_{1}^{\chi(V_p)+\eta_o + \eta_c} \frac{2 \,dr}{\psi(r)} &= \int_{V_p+\eta_o + \eta_c}^{\chi(V_p)+\eta_o + \eta_c} \frac{2 \, dr}{\psi(r)} + \int_{1}^{V_p+\eta_o + \eta_c} \frac{2 \,dr}{\psi(r)}. \notag \\
\end{align}
Since $\eta_o + \eta_c \geq 0$ and $\frac{1}{\psi(r)}$ is decreasing, we have: 
\[
 \int_{V_p + \eta_o + \eta_c}^{\chi(V_p) + \eta_o + \eta_c} \frac{2 {\color{blue}c_0} \, dr}{\psi(r)} \leq \int_{V_p}^{\chi(V_p)} \frac{2 {\color{blue}c_0} \, dr}{\psi(r)} \le 2 {\color{blue}c_0} \zeta^*,
\] 
so that
\[
\int_{1}^{\chi(V_p)+\eta_o + \eta_c} \frac{2 {\color{blue}c_0} \, dr}{\psi(r)} \leq 2 {\color{blue}c_0} \zeta^* + \int_{1}^{V_p+\eta_o + \eta_c} \frac{2 {\color{blue}c_0} \,dr}{\psi(r)}.
\]
Thus, the value of $W$ after each jump is bounded as
\begin{align}
\max_{g \in \cG(\xi)} W(g) &\leq \exp(-2{\color{blue}c_0} (\zeta-\zeta^*))\exp(2 {\color{blue}c_0}\zeta \tau) \varphi(V_p + \eta_o + \eta_c) \notag \\
&\leq \exp(-2{\color{blue}c_0} (\zeta-\zeta^*)) W(\xi) \qquad \forall\,\xi \in \cD.
\end{align}

Because of the bounds in \eqref{eq:Wsamp}, it thus follows that $\xi$ converges asymptotically to the set $\cA_c$. To conclude further that $(x_d,z_d)$ also converge to $\{0\}^{2n}$, 
one can invoke the arguments based on the LaSalle's invariance principle \cite[Corollary 8.9(ii)]{GoebSanf12}, and cascaded hybrid systems \cite[Corollary~19]{GoebSanf09}.
Following the same recipe as in \cite[Proof of Theorem~1]{TanwTeel15}, we next show that the set $\cA$ of the closed-loop system is a globally asymptotically stable (GAS) for system \eqref{eq:bigHybSys}:

{\it Step~1 -- Pre-GAS of $\{0\}$ for truncated systems:}
For a fixed initial condition, there exist compact set $\cM_1 \subset \R^{2n}$, $\cM_2 \subset \R^2 \times \cP \times \R$ such that $(x,z) \in \cM_1$ and $(\eta_o,\eta_c,p,\tau) \in \cM_2$.
Recalling that $z_d$ and $x_d$ remain constant during flows, and are reset to $z$ and $x$, which belong to a compact set, there exists a compact set $\cM_d$ such that $(x_d, z_d) \in \cM_d$.
Consider the truncation of system~\eqref{eq:bigHybSys} to the set $\cM:=\R^{2n} \times \cM_d \times \R_{\ge 0}^2 \times \cP \times [0,N_0]$, which has the flow set $\cC_{\cM}:=\cC \cap \cM$, the jump set $\cD_{\cM}:= \cD_{\cM} \cap \cM$.
For this truncated system, it follows from the invariance principle \cite[Corollary~8.9\,(ii)]{GoebSanf12} that the set $\cA_1:= \{0\}^{2n} \times \cM_d \times \{0\}^2 \times \cP \times [0,N_0]$
is pre-GAS.
We next invoke the stability result for cascaded hybrid systems \cite[Corollary~19]{GoebSanf09} to claim that the set $\cA$ in \eqref{eq:defSetA} is pre-GAS for the truncated system.
Indeed, for every system trajectory contained in $\cA_1$, we have $\eta_o =\eta_c = 0$, and from the definition of the sets $\cC$ and $\cD$, we must then have $x_d = 0$ and $z_d = 0$.

{\it Step~2 -- Bounded solutions and Pre-GAS of $\{0\}$ for \eqref{eq:bigHybSys}:}
As shown in the first step, for each initial condition, there exist compact sets $\cM_1$, $\cM_2$ and $\cM_d$ such that $\xi$ is contained in the compact set $\cM_1 \times \cM_d \times \cM_2$ for all times.
Boundedness of the solutions now allows us to conclude that $\cA$ is pre-GAS for the original system~\eqref{eq:bigHybSys}.
To see this, assume that there exists a solution for which $(x,z,x_d,z_d,\eta_o,\eta_c)$ does not converge to $\{0\}$. Since all solutions are  bounded, there exists a compact set $\cM_d$ such that this bounded solution eventually coincides with the solution of the system truncated to $\R^{2n} \times \cM_d \times \R_{\ge 0}^2 \times \cP \times [0,N_0]$. But, every solution of the truncated system must converge to $\cA$. Hence, for \eqref{eq:bigHybSys}, a bounded solution not converging to $\cA$ cannot exist, proving that $\cA$ is pre-GAS.

{\it Step~3 -- $\{0\}$ is GAS for \eqref{eq:bigHybSys}:} To move from pre-asymptotic stability to asymptotic stability of the compact set $\cA$, we next show that every solution of \eqref{eq:bigHybSys} is forward complete.
This is seen due to the fact that for each $\xi \in \cC \setminus \cD$, the solutions would always continue to flow. Moreover, after each jump the states are reset to the set $\cC \cup \cD$, making it possible to extend the time domain for the solutions either by jump or flow.
Hence, each solution of the system is forward complete, proving that the set $\cA$ is GAS.
%
\end{proof}

\begin{remark}
For implementation purposes, it is important to show that, for event-based sampling, there is a uniform lower bound on the minimal inter-sampling time between two consecutive sampling instants. For the algorithms employed in this article, such a lower bound has been obtained for nonswitched dynamical systems in \citep[Theorem~2]{TanwTeel15} under certain additional assumptions on the functions appearing in the dynamic filters \eqref{dotetaoetac}. For switched systems, when working under the slow switching assumption like dwell-time or average dwell-time, it suffices to have such a lower bound for an individual dynamical subsystem since this guarantees there will be no accumulations of jump events if the switching signal has no accumulation of switches.
\end{remark}

\section{Example and Simulation Result}\label{sec:example}
As an illustration of Theorem~\ref{thm:stabSampSys}, we consider an academic example of a switched system with two modes.
The first subsystem is described by linear dynamics as follows:
\begin{equation}\label{linearswisysexample}
p = 1:  \begin{cases} \dot x= A_1 x + B_1 u \\
y = C_1 x \end{cases}
\end{equation}
The feedback controller related to this subsystem is: 
\begin{equation}
p = 1:  \begin{cases} \dot z = A_1 z + B_1 u_d + L_1(y - C_1z) \\
u = - K_1 z, \end{cases}
\end{equation}
where we choose $A_1 =\begin{bmatrix} 0.5 & -1\\ 0 & 0.5\end{bmatrix}$, $B_1 = \begin{bmatrix} 0 \\ 1\end{bmatrix} $, $C_1 = \begin{bmatrix}1 & 0\end{bmatrix}$, $L_1 = \begin{bmatrix} 3.5 \\ -3\end{bmatrix}$ and $K_1 =\begin{bmatrix}-1.5 & 2.5\end{bmatrix} $. 

The second subsystem has nonlinear dynamics described by:
\begin{equation}
p = 2:  \begin{cases}
\dot x_{1} = x_{2} + 0.25 | x_{1} |  \\
\dot x_{2} = \mbox{sat} (x_{1} )+u_d  \\
y = x_{1}. \end{cases}
\end{equation}
The notation $\mbox{sat}$ denotes the saturation function $\mbox{sat} (x_1) = \min \left\{ 1, \max \left\{ -1, x_{1}\right\} \right\}$. The corresponding feedback controller is:
\begin{equation}
p = 2:  \begin{cases}
\dot z_{1} = z_{2} + 0.25 |y| + l_{1}\left(y - z_{1} \right) \\
\dot z_{2} = \mbox{sat}(y) + u_d + l_{2} \left( y - z_{1} \right) \\
u = \mbox{sat}(z_{1}) + K_2 z, \end{cases}
\end{equation}
where we choose $K_2  = \begin{bmatrix} -2 & -2\end{bmatrix} $ and $L_2 =\begin{bmatrix}  l_{1} \\  l_{2}\end{bmatrix}= \begin{bmatrix} 2 \\ 2\end{bmatrix}$.

%

For both subsystems, we introduce the same form of Lyapunov function: $V_{c,p}(x) = x^T P_{c,p} x$ and $V_{o,p}(e) = e^T P_{o,p} e$. Since the controller is driven by the sampled output $y_{d}$, we have for each $p \in \left\{ 1, 2\right\}$:
\[
\dot V_{o,p} \leq -a_{o,p} V_{o,p}(e) + \gmaopupbar \left| y - y_d \right|^2,
\]
where
\[
a_{o,p} = \frac{\lambda_{\min}(Q_{o,p})}{2\lambda_{\max}(P_{o,p})} \quad \text{ and } \quad \gmaopupbar = \frac{2\| P_{o,p}L_p \|^2}{\lambda_{\min}(Q_{o,p})}.
\]
Similarly, for each $p \in \left\{ 1, 2\right\}$:
\[
\dot V_{c,p} \leq -a_{c,p} V_{c,p}(x) + \gmacpupbar V_{o,p}(e)+ \gmacpupbar \left| z- z_d \right|^2,
\] 
where 
\[
a_{c,p} = \frac{\lambda_{\min}(Q_{c,p})}{2\lambda_{\max}(P_{c,p})},
\]
and
\[
\gmacpupbar =  \frac{4 \| P_{c,p} B_p K_p \|^2}{\lambda_{\min}(Q_{c,p})} \max\left\{ 1,\frac{1}{\lambda_{\min}(P_{o,p})}\right\}.
\]
\begin{figure}[!t]
\centering
\includegraphics[width=0.48\textwidth]{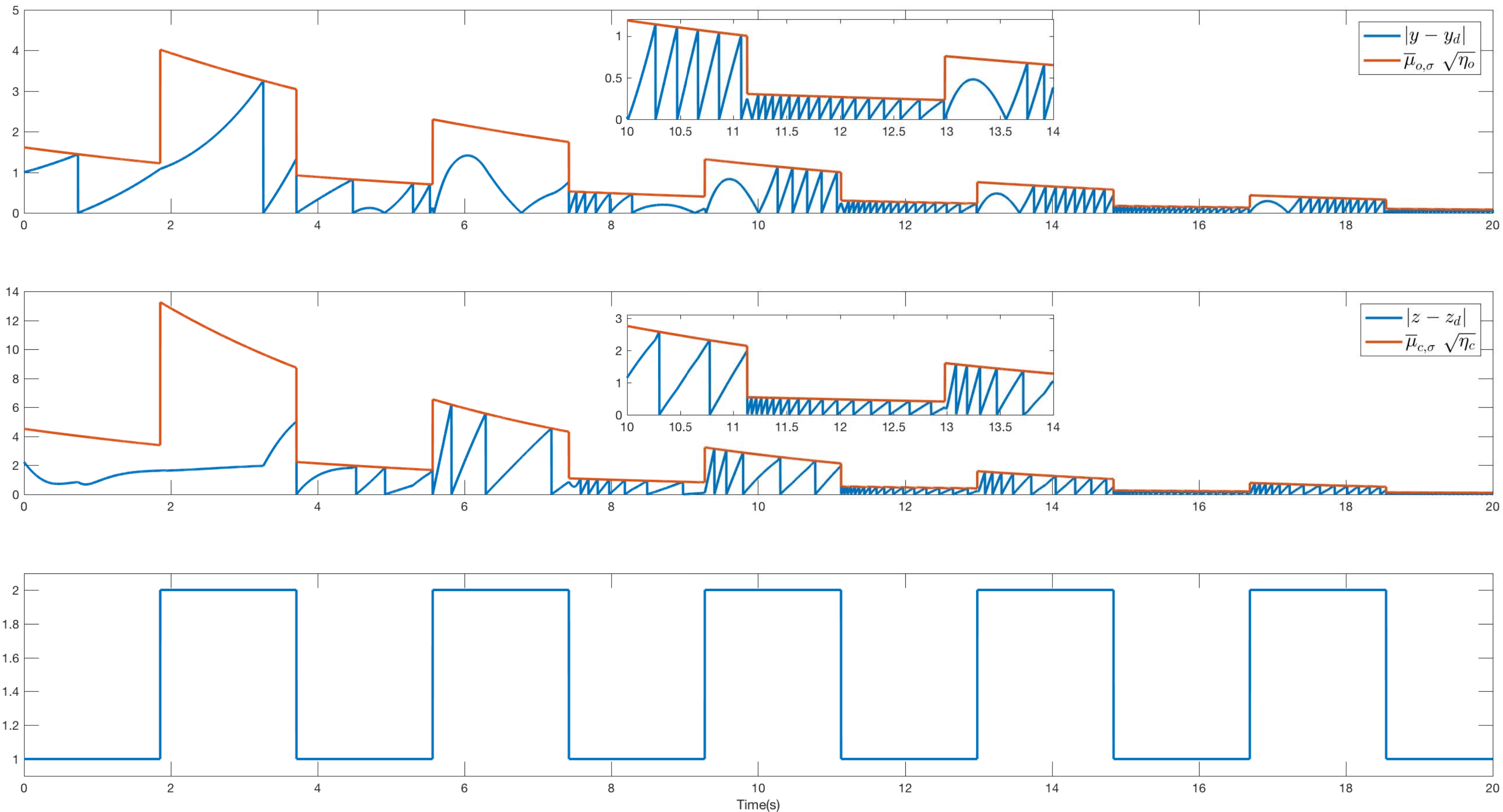}
\vskip -0.5em
\caption{Simulation results: In the top plot, whenever $|y(t)-y_d(t)|$ reaches the sampling threshold $\overline \mu_{o,\sigma}\sqrt{\eta_o(t)}$, $y_d$ is updated. The middle plot shows $\overline \mu_{c,\sigma}\sqrt{\eta_c(t)}$ and $|z(t)-z_d(t)|$. The bottom plot shows the switching signal used in the simulations.}
\label{fig:simExSamp}
\end{figure}
For the dynamic filters, we choose
\begin{align}
&\dot \eta_o = -a_{o,p}\eta_o + \rhoopupbar|y|^2 + \gmaopupbar|y-y_d|^2 \notag \\
& \dot \eta_c = -a_{c,p}\eta_c + \rhocpupbar \frac{|z|^2}{4} + \gmacpupbar|z-z_d|^2, \notag
\end{align} 
where we let 
\begin{align*}
&\rhoopupbar := \frac{(1-2\eps)\lambda_{\min} (P_{o,p})}{\norm{C_p}^2},\\
&\rhocpupbar :=  \min \left\{ (1-\eps)\gmacpupbar, \eps a_{c,p}\lambda_{\min} (P_{c,p}) \right\}
\end{align*}
for some small $\eps \in (0,0.5)$.
The jump set which describes the conditions when the sampled values get updated or when there is a switching event occurs, is defined as follows:
\begin{multline*}
\cD = \left\{\xi : |y-y_d| \geq \sigmaopupbar \sqrt{\eta_o} \right\} \cup \left\{ \xi : |z-z_d| \geq \sigmacpupbar \sqrt{\eta_c}\right\} \\ \cup \left\{ \tau \in [1,N_0]\right\} 
\end{multline*}
where $\sigmaopupbar := \frac{(1-\eps) \alpha_{o,p}}{(1+\overline{\nu}_p)\gmaopupbar}$ and $\sigmacpupbar := \frac{(1-\eps)\alpha_{c,p}}{2\gmacpupbar}$, with
\[
\overline{\nu}_p = \frac{4\overline{\gamma}_{c,p}}{\lambda_{\min} (P_{o,p})}.
\]
The function $\chi$ can thus be defined as $\chi (s)= \ol \chi s$, where
\[
\overline{\chi} = \max_{p,q \in \left\{ 1, 2 \right\}} \left\{ \frac{\overline{\nu}_p \lambda_{\max} (P_{o,p}) +\lambda_{\max} (P_{c,p})}{\overline{\nu}_q \lambda_{\min} (P_{o,q}) +\lambda_{\min} (P_{c,q}) } \right\}.
\]
It follows from Theorem~\ref{thm:stabSampSys} that, if the average dwell-time $\tau_a$ satisfies that:
\[
\tau_a > \frac{\ln \overline{\chi}}{\eps},
\]
then we have the asymptotic stability of the origin for $(x,z)$ system.
The simulation results reported in Figure~\ref{fig:simExSamp} indeed show the convergence of $(y,z,\eta_o,\eta_c)$ to the origin.

\section{Conclusion}\label{sec:conc}

In this article, the construction of ISS Lyapunov functions is considered for switched nonlinear systems in cascade configuration. The stability analysis for the resulting hybrid systems is carried out under an average dwell-time condition on the switching signal, and an asymptotic ratio condition for establishing ISS. The results pave the path for studying the stabilization of switched systems with dynamic output feedback. A Lyapunov function similar to the one used for non-sampled ISS system is constructed to design the sampling algorithms and for the analysis of the closed-loop hybrid systems with sampled measurements. The results are illustrated with the help of examples and simulations. One of the limitations of the dynamic output feedback problem considered in this paper is that the controller requires exact knowledge of the switching signal. It is of interest to develop theoretical tools when there is mismatch in the switching signal between the plant and the controller. One can also consider additional measurement errors, for example, due to quantization of output and input in space, as done in \citep{TanwPrie16}. One could also potentially study the affect of random uncertainties, on top of event-based samples, as has been recently proposed in \citep{TanwTeel17}.

\section*{Acknowledgements}

The authors would like to thank Dr.~Guosong Yang for his useful comments on an earlier version of this paper.

\bibliography{localEventTrig}

\end{document}